\def\theequation{\arabic{section}.\arabic{equation}}
\newcommand{\be}{\begin{equation}}
	\newcommand{\en}{\end{equation}}
\newcommand{\bea}{\begin{eqnarray}}
	\newcommand{\ena}{\end{eqnarray}}
\newcommand{\beano}{\begin{eqnarray*}}
	\newcommand{\enano}{\end{eqnarray*}}
\newcommand{\bee}{\begin{enumerate}}
	\newcommand{\ene}{\end{enumerate}}
\newcommand{\mc}{\mathcal}
\newcommand{\D}{{\mc D}}
\newcommand{\V}{{\mc V}}
\newcommand{\Sc}{{\cal S}}
\newcommand{\E}{{\cal E}}
\newcommand{\F}{{\cal F}}
\newcommand{\G}{{\cal G}}
\newcommand{\Lc}{{\cal L}}
\newcommand{\ltwo}{{\Lc^2(\mathbb{R})}}
\newcommand{\scr}{{\Sc(\mathbb{R})}}
\newcommand{\1}{1 \!\! 1}
\newcommand{\Hil}{\mc H}
\newtheorem{thm}{Theorem}
\newtheorem{lemma}[thm]{Lemma}
\newtheorem{defn}[thm]{Definition}
\newenvironment{proof}{\noindent {\bf Proof --}}{\hfill$\square$ \vspace{3mm}\endtrivlist}
\begin{document}

\thispagestyle{empty}

\vspace*{2cm}

\begin{center}
{\Large \bf A Swanson-like Hamiltonian and the inverted harmonic oscillator}   \vspace{2cm}\\

{\large Fabio Bagarello}\\
  Dipartimento di  Ingegneria, \\
Universit\`a di Palermo,\\ I-90128  Palermo, Italy,\\
e-mail: fabio.bagarello@unipa.it\\

\end{center}

\vspace*{1cm}

\begin{abstract}

We deduce the eigenvalues and the eigenvectors of a parameter-dependent Hamiltonian $H_\theta$ which is closely related to the Swanson Hamiltonian, and we construct bi-coherent states for it. After that, we show how and in which sense the eigensystem of the Hamiltonian $H$ of the inverted quantum harmonic oscillator can be deduced from that of $H_\theta$. We show that there is no need to introduce a different scalar product using some ad hoc metric operator, as suggested by other authors. Indeed we prove that a distributional approach is sufficient to deal with the Hamiltonian $H$ of the inverted oscillator.

\end{abstract}

\vspace{2cm}

%{\bf PACS Numbers}:  .......

\vfill

%\pagenumbering{roman}

\newpage

\section{Introduction}

In the past years we have observed an always increasing interest for non-Hermitian Hamiltonians in quantum mechanics. This is because, even for quite non trivial systems like the cubic oscillator, it become evident that {\em strange} Hamiltonians may have real eigenvalues, even in presence of purely imaginary potentials, \cite{ben1}. Since then, a lot of work was done to achieve a deeper comprehension of this kind of operators, both for their physical relevance, and for their interesting mathematical properties. Some relevant monographs and edited books are  \cite{benbook}-\cite{bagspringer}.

Several Hamiltonians have been proposed by many authors along the years, some defined on finite-dimensional Hilbert spaces and some other in $\ltwo$, or in similar spaces. Quite often, an Hamiltonian $H$ which is not (Dirac-) self-adjoint, $H\neq H^\dagger$, but which has all its eigenvalues real, turns out to be self-adjoint with respect to a different conjugation map, defined by means of a different scalar product which is not, for instance, the natural one in $\ltwo$. In this case, some metric operator can be introduced in the game. This is essentially an invertible operator $S$ which is the key ingredient to define a new scalar product in the Hilbert space $\Hil$ starting from its natural one, $\langle.,.\rangle$: $\langle f,g\rangle_S=\langle Sf,g\rangle$. Of course, $S$ cannot be completely arbitrary. First of all, it must be positive, in order to ensure that $\langle f,f\rangle_S\geq0$ for all $f\in D(S)$, the domain of $S$. Moreover, the simplest situation is when both $S$ and $S^{-1}$ are bounded. In this case we can always assume that $D(S)=D(S^{-1})=\Hil$. However, this is not always guaranteed  if $\dim(\Hil)=\infty$. Actually, most of the times, in concrete examples, the opposite is true. This is the case of the  cubic oscillator mentioned above, \cite{petr1}, but also of many other systems which have been considered by many authors. We refer to \cite{baginbagbook} for an analysis of some of these systems from this point of view. 

Two operators which have been considered in this context in many details by several authors are the Swanson and the inverted quantum oscillator Hamiltonians, see, e.g., \cite{swan}-\cite{reboiro} for the first and \cite{barton}-\cite{marcucci} for the latter. In our knowledge the link between these Hamiltonians is not considered in the literature. This is possibly because the Swanson Hamiltonian $h_\theta$, which depends on the parameter $\theta\in\left]-\frac{\pi}{4},\frac{\pi}{4}\right[$ becomes singular in the limit $\theta\rightarrow\pm\,\frac{\pi}{4}$. In this paper we propose a new approach to the inverted quantum harmonic oscillator (IQHO) based on a Swanson-like Hamiltonian $H_\theta$ which does not become singular for any value of $\theta$. The analysis of the eigenvalues and the eigenvectors of $H_\theta$ is based on the construction of pseudo-bosonic ladder operators. The eigenvectors turn out to be in $\ltwo$, but only for some range of values of $\theta$. For these values we also construct bi-coherent states for the annihilation operators. Then we show that, taking a suitable limit on $\theta$, the IQHO can be recovered, together with its eigenvalues, eigenvectors and bi-coherent states. However, in this case, we have to leave $\ltwo$ and go to a suitable space of continuous functionals which, we believe, is the natural habitat for these states, and for the physical system.

The paper is organized as follows:

In the next section we propose our slightly modified version of the Swanson Hamiltonian, $H_\theta$, which can be rewritten in terms of pseudo-bosonic ladder operators, and we compute its eigenvalues and eigenvectors, imposing the constraint that these latter should be square-integrable. The same construction is repeated for $H_\theta^\dagger$.  The properties of these eigenvectors are discussed, and the bi-coherent states for the system are defined and studied.

Then, in Section \ref{sect3} we introduce the Hamiltonian $H$ of the IQHO and we show that $\ltwo$ is not the proper space to work with. Indeed we show that a distributional approach is more convenient, since the eigenvectors of $H$ can be found as weak limits of those of $H_\theta$, and that they are tempered distributions. We also show that a second space of continuous functionals can be constructed, which also contains the eigenvectors of $H$ and their associated bi-coherent states.
It can be useful to stress that, with respect to other approaches proposed in the literature on the IQHO, we work here with the natural scalar product in $\ltwo$ and with the sesquilinear form which extends it. In other words, we do not introduce any unessential metric operator.

Section \ref{sect4} contains our conclusions, while some technical aspects are discussed in the Appendixes at the end of the paper. A third Appendix is devoted to the main definitions of $\D$-pseudo-bosons, useful for make the paper more readable.

  \section{The Swanson-like Hamiltonian}\label{sect2}
  
  Let us consider the following Hamiltonian in $\ltwo$:
  \be
  H_\theta=\frac{1}{2}\left(p^2+e^{2i\theta}\Omega^2x^2\right),
  \label{21}\en
for $\theta\in[-\pi,\pi]$, for the moment, and $\Omega>0$. Here, as usual, $[x,p]=i\1$, $x=x^\dagger$ and $p=p^\dagger$. It is clear that, if $\theta=\pm\frac{\pi}{2}$, $H_\theta$ becomes the Hamiltonian of the IQHO, $H_\pm=\frac{1}{2}\left(p^2-\Omega^2x^2\right)$.

Before starting our analysis of $H_\theta$, it is interesting to sketch briefly its relation with the Hamiltonian of the Swanson model, \cite{baginbagbook},
$$
h_\varphi=\frac{1}{2}\left(p^2+x^2\right)-\frac{i}{2}\,\tan(2\varphi)\left(p^2-x^2\right)=\frac{e^{-2i\varphi}}{2\cos(2\varphi)}\left(p^2+x^2e^{4i\varphi}\right).
$$
Apart from the obvious differences ($\Omega=1$ and $\varphi=2\theta$), what makes $h_\varphi$ not so relevant for us, in view of our interest for the IQHO, is the fact that when $\varphi\rightarrow\frac{\pi}{4}$, which would correspond to $p^2-x^2$, $\cos(2\varphi)\rightarrow0$, and $h_\varphi$ is no longer defined. For this reason, accordingly with  \cite{davies}, we prefer to exclude the $\varphi$-dependent quantity in the definition of the Hamiltonian, and work directly with $H_\theta$ in (\ref{21}).

Let us introduce the operators
\be
A_\theta=\frac{1}{\sqrt{2\Omega}}\left(e^{i\theta/2}\Omega\,x+i\,e^{-i\theta/2}p\right), \quad B_\theta=\frac{1}{\sqrt{2\Omega}}\left(e^{i\theta/2}\Omega\,x-i\,e^{-i\theta/2}p\right),
\label{22}\en
for all possible $\theta$. It is clear that $A_\theta$ and $B_\theta$ are densely defined in $\ltwo$, since in particular any test function $f(x)\in\Sc(\mathbb{R})$, the Schwartz space, belongs to the domains of both these operators: $\Sc(\mathbb{R})\subseteq D(A_\theta)$ and $\Sc(\mathbb{R})\subseteq D(B_\theta)$, for all $\theta$. It is also clear that $A_\theta^\dagger\neq B_\theta$. Indeed we can check that, for instance on $\Sc(\mathbb{R})$,
\be
A_\theta^\dagger=\frac{1}{\sqrt{2\Omega}}\left(e^{-i\theta/2}\Omega\,x-i\,e^{i\theta/2}p\right), \quad B_\theta^\dagger=\frac{1}{\sqrt{2\Omega}}\left(e^{-i\theta/2}\Omega\,x+i\,e^{i\theta/2}p\right).
\label{23}\en
The set $\scr$ is stable under the action of all these operators.
Formulas (\ref{23}) show that
\be
A_\theta^\dagger=B_{-\theta}, \qquad B_\theta^\dagger=A_{-\theta}.
\label{24}\en
Moreover, it is easy to see that these operators obey pseudo-bosonic commutation rules, \cite{baginbagbook}:
\be
[A_\theta,B_\theta]f(x)=f(x)
\label{25}\en
for all $f(x)\in\scr$, and for all values of $\theta$. This is in agreement with the fact that, if $\theta=0$, we go back to the ordinary bosonic operators $c=\frac{\Omega x+ip}{\sqrt{2\Omega}}$ and $c^\dagger=\frac{\Omega x-ip}{\sqrt{2\Omega}}$, $[c,c^\dagger]=\1$. Indeed we have
$$
A_0=B_0^\dagger=c, \qquad B_0=A_0^\dagger=c^\dagger.
$$
In terms of the operators in (\ref{22}) $H_\theta$ can be rewritten as
\be
H_\theta=\Omega e^{i\theta}\left(B_\theta A_\theta+\frac{1}{2}\1\right).
\label{26}\en
Then, because of (\ref{24}),
we have that
\be
H_\theta^\dagger=\Omega e^{-i\theta}\left(A_\theta^\dagger B_\theta^\dagger+\frac{1}{2}\1\right)=H_{-\theta},
\label{27}\en
on $\scr$. To deduce the eigensystems of $H_\theta$ and $H_\theta^\dagger$, following the usual strategy adopted for ladder operators and for pseudo-bosons in particular, see the Appendix \ref{appendixC} or \cite{baginbagbook}, we should look now for the ground state of the two annihilation operators $A_\theta$ and $B_\theta^\dagger$. But, since  $B_\theta^\dagger=A_{-\theta}$, it is sufficient to solve the differential equation $A_\theta\varphi_0^{(\theta)}(x)=0$, since the solution of $B_\theta^\dagger\psi_0^{(\theta)}(x)=0$ can be deduced as $\psi_0^{(\theta)}(x)=\varphi_0^{(-\theta)}(x)$. Hence, recalling that $p=-i\frac{d}{dx}$, we find:
\be
\varphi_0^{(\theta)}(x)=N^{(\theta)}e^{-\frac{1}{2}\,\Omega e^{i\theta}x^2}, \qquad \psi_0^{(\theta)}(x)=N^{(-\theta)}e^{-\frac{1}{2}\,\Omega e^{-i\theta}x^2},
\label{28}\en
where $N^{(\pm\theta)}$ are normalization constants which will be fixed later. 
All along this section we will work with functions in $\ltwo$. This requirement will be relaxed when dealing with the IQHO, since square integrability is lost, in that case. From (\ref{28}) we see that the vacua are in $\ltwo$ if $\Re(e^{\pm i\theta})=\cos(\theta)>0$. For this reason, from now on, we will require that $\theta\in I=\left]-\frac{\pi}{2},\frac{\pi}{2}\right[$. This constraint reminds very much the similar one for the Swanson model, where it was needed both for keeping square-integrability of the eigenstates of the Hamiltonian, and to work with a well defined Hamiltonian\footnote{Here this last problem is solved ab initio. But the problem with the square-integrability is still there, and requires $\theta\in I$.}.

With this in mind, and using again the usual pseudo-bosonic approach, we can construct two families of square-integrable functions, $\F_\varphi^{(\theta)}=\{\varphi_n^{(\theta)}(x), \,n=0,1,2,\ldots\}$ and $\F_\psi^{(\theta)}=\{\psi_n^{(\theta)}(x), \,n=0,1,2,\ldots\}$, where
\be
\varphi_n^{(\theta)}(x)=\frac{B_\theta^n}{\sqrt{n!}}\,\varphi_0^{(\theta)}(x)=\frac{N^{(\theta)}}{\sqrt{2^n\,n!}}\,H_n\left(e^{i\theta/2}\sqrt{\Omega}\,x\right)e^{-\frac{1}{2}\,\Omega e^{i\theta}x^2}
\label{29}\en
and
\be
\psi_n^{(\theta)}(x)=\frac{{A_\theta^\dagger}^n}{\sqrt{n!}}\,\psi_0^{(\theta)}(x)=\varphi_n^{(-\theta)}(x)=\frac{N^{(-\theta)}}{\sqrt{2^n\,n!}}\,H_n\left(e^{-i\theta/2}\sqrt{\Omega}\,x\right)e^{-\frac{1}{2}\,\Omega e^{-i\theta}x^2}.
\label{210}\en
Here $H_n(x)$ is the $n$-th Hermite polynomial. The proof of these formulas, and in particular of (\ref{29}), is a consequence of the definition $\varphi_n^{(\theta)}(x)=\frac{1}{\sqrt{n!}}\,B_\theta^n\varphi_0^{(\theta)}(x)$, together with the following identity for the Hermite polynomials: $H_n(y)=2yH_{n-1}(y)-H'_{n-1}(y)$, $n\geq1$. Since it does not differ much from other similar derivations, it will not be given here.

It is clear that, for $\theta\in I$, $\varphi_n^{(\theta)}(x), \psi_n^{(\theta)}(x)\in\ltwo$, for all $n\geq0$. Also, these functions belong to the domain of $A_\theta$, $B_\theta$ and of their adjoints, and, as for all pseudo-bosonic operators, we have, as in (\ref{C3}),
\be
\left\{
\begin{array}{ll}
	B_\theta\,\varphi_n^{(\theta)}(x)=\sqrt{n+1}\,\varphi_{n+1}^{(\theta)}(x), \hspace{3.5cm} n\geq 0,\\
	A_\theta\,\varphi_0^{(\theta)}(x)=0,\quad A_\theta\,\varphi_n^{(\theta)}(x)=\sqrt{n}\,\varphi_{n-1}^{(\theta)}(x), \qquad\,\,\,\, n\geq 1,\\
	A_\theta^\dagger\,\psi_n^{(\theta)}(x)=\sqrt{n+1}\,\psi_{n+1}^{(\theta)}(x), \hspace{3.5cm} n\geq 0,\\
	B_\theta^\dagger\,\psi_0^{(\theta)}(x)=0,\quad B_\theta^\dagger\,\psi_n^{(\theta)}(x)=\sqrt{n}\,\psi_{n-1}^{(\theta)}(x), \qquad\,\,\,\, n\geq 1,\\
	N^{(\theta)}\varphi_n^{(\theta)}(x)=n\,\varphi_n^{(\theta)}(x), \hspace{4.6cm} n\geq 0,\\ {N^{(\theta)}}^\dagger\psi_n^{(\theta)}(x)=n\,\psi_n^{(\theta)}(x), \hspace{4.5cm} n\geq 0,
\end{array}
\right.
\label{211}\en
where $N^{(\theta)}=B_\theta A_\theta$ and ${N^{(\theta)}}^\dagger$ is its adjoint. Then, using (\ref{26}) and (\ref{27}), we conclude that
\be
H_\theta\varphi_n^{(\theta)}(x)=E_n^{(\theta)}\varphi_n^{(\theta)}(x), \qquad H_\theta^\dagger\psi_n^{(\theta)}(x)=E_n^{(-\theta)}\psi_n^{(\theta)}(x),
\label{212}\en
where $E_n^{(\theta)}=\omega e^{i\theta}\left(n+\frac{1}{2}\right)$. Notice that $E_n^{(-\theta)}=\overline{E_n^{(\theta)}}$. Hence the eigenvalues of $H_\theta$ and $H_\theta^\dagger$ have, for generic $\theta\in I$, a non zero real and a non zero imaginary part. 

\vspace{2mm}

{\bf Remark:--} If $\theta=0$ everything collapses to the usual quantum harmonic oscillator, as it is clear from (\ref{21}). In this case, if we take $N^{(0)}=\left(\frac{\Omega}{\pi}\right)^{1/4}$, 
\be
\varphi_n^{(0)}(x)=\psi_n^{(0)}(x)=e_n(x)=\frac{1}{\sqrt{2^n\,n!}}\left(\frac{\Omega}{\pi}\right)^{1/4}\,H_n\left(\sqrt{\Omega}\,x\right)e^{-\frac{1}{2}\,\Omega\,x^2},
\label{213}\en
which is the well known $n$-th eigenstate of the quantum harmonic oscillator, as expected.

\vspace{2mm}

Another, also expected, feature of the families $\F_\varphi^{(\theta)}$ and $\F_\psi^{(\theta)}$ is that, with a proper choice of normalization, their vectors are mutually biorthonormal. Indeed if we fix
\be
N^{(\theta)}=\left(\frac{\Omega}{\pi}\right)^{1/4}e^{i\theta/4},
\label{214}\en
we can check that
\be
\langle\varphi_n^{(\theta)},\psi_m^{(\theta)}\rangle=\delta_{n,m},
\label{215}\en
for all $n,m\geq0$ and for all $\theta\in I$. Incidentally we observe that (\ref{214}) gives back the right normalization when $\theta=0$. Formula (\ref{215}) is a consequence of our algebraic settings, \cite{baginbagbook}, but can also be deduced explicitly, by using some properties of the contour integration in the complex plane, see Appendix A.

It is interesting to observe that the functions $\varphi_n^{(\theta)}(x)$ and $\psi_n^{(\theta)}(x)$ are essentially the rotated versions of the eigenstates $e_n(x)$ in (\ref{213}):
\be
\varphi_n^{(\theta)}(x)=e^{i\theta/4}e_n(e^{i\theta/2}x), \qquad \psi_n^{(\theta)}(x)=e^{-i\theta/4}e_n(e^{-i\theta/2}x),
\label{216}\en
for all $n\geq0$. This is in agreement with (\ref{215}):
$$
\langle\varphi_n^{(\theta)},\psi_m^{(\theta)}\rangle=\int_{\mathbb{R}}\overline{\varphi_n^{(\theta)}(x)}\,\psi_m^{(\theta)}(x)dx=\int_{\Gamma_\theta}e_n(z)e_m(z)dz=\int_{\mathbb{R}}e_n(x)e_m(x)dx=\langle e_n,e_m\rangle=\delta_{n,m},
$$
where we have used the results in Appendix A, to which we refer, and the reality of the functions $e_n(x)$. 

The next step in our analysis is to check if the sets  $\F_\varphi^{(\theta)}$ and $\F_\psi^{(\theta)}$ are both complete in $\ltwo$ and if they produce some resolution of the identity. Indeed, the answer to the first question is affirmative: $\F_\varphi^{(\theta)}$ and $\F_\psi^{(\theta)}$ are both complete in $\ltwo$. This follows from a standard argument adopted in several papers, see \cite{baginbagbook} for instance, and originally proposed, in our knowledge, in \cite{kolfom}: if $\rho(x)$ is a Lebesgue-measurable function which is different from zero almost everywhere (a.e.) in $\Bbb R$, and if there exist two positive constants $\delta, C$ such that $|\rho(x)|\leq C\,e^{-\delta|x|}$ a.e. in $\Bbb R$, then the set $\left\{x^n\,\rho(x)\right\}$ is complete in $\ltwo$. We refer to \cite{baginbagbook} for some physical applications of this result. Because of these completeness, the sets $\Lc_\varphi^{(\theta)}=l.s.\{\varphi_n^{(\theta)}(x)\}$ and $\Lc_\psi^{(\theta)}=l.s.\{\psi_n^{(\theta)}(x)\}$, i.e. the linear spans of the functions in $\F_\varphi^{(\theta)}$ and in $\F_\psi^{(\theta)}$, are both dense in $\ltwo$. Now, (\ref{215}) implies that
\be
\sum_{n=0}^\infty\langle f,\varphi_n^{(\theta)}\rangle\langle\psi_n^{(\theta)},g\rangle=\langle f,g\rangle,
\label{217}\en
$\forall f(x)\in\Lc_\psi^{(\theta)}$ and $\forall g(x)\in\Lc_\varphi^{(\theta)}$, which is a sort of weak resolution of the identity, or some extended version of the Parseval identity.

\subsection{A similarity operator}\label{sectII1}

What we have done so far can be restated in terms of a similarity operator, $V_\theta$, unbounded with unbounded inverse. We put
\be
V_\theta e_n(x)=e^{i\theta/4}e_n(e^{i\theta/2}x)=\varphi_n^{(\theta)}(x),
\label{218}\en
$\forall n\geq0$. This action can be extended by linearity to $\Lc_e$, the linear span of the $e_n(x)$'s. Hence $V_\theta$ is densely defined. Its inverse can be again defined on $\Lc_e$. In particular we have
\be
V_\theta^{-1} e_n(x)=e^{-i\theta/4}e_n(e^{-i\theta/2}x)=\psi_n^{(\theta)}(x).
\label{219}\en
Of course $V_\theta^{-1}=V_{-\theta}$. These two formulas imply that $\varphi_n^{(\theta)}(x)\in D(V_\theta^{-1})$ and $\psi_n^{(\theta)}(x)\in D(V_\theta)$ and that
\be
V_\theta^{-1}\varphi_n^{(\theta)}(x)=V_\theta\psi_n^{(\theta)}(x)=e_n(x),
\label{220}\en
$\forall n\geq0$. Furthermore, it is possible to check that $\psi_n^{(\theta)}(x)\in D(V_\theta^\dagger)$, $\varphi_n^{(\theta)}(x)\in D(V_{-\theta}^\dagger)$, and that
\be
V_{-\theta}^\dagger\varphi_n^{(\theta)}(x)=V_\theta^\dagger\psi_n^{(\theta)}(x)=e_n(x),
\label{221}\en
$\forall n\geq0$. Indeed we have, for instance
$$
\langle e_n,e_m\rangle=\langle\varphi_n^{(\theta)},\psi_m^{(\theta)}\rangle=\langle V_\theta e_n,\psi_m^{(\theta)}\rangle=\langle  e_n,V_\theta^\dagger\psi_m^{(\theta)}\rangle,
$$
so that $\langle e_n, e_m-V_\theta^\dagger\psi_m^{(\theta)}\rangle=0$ for all $n$, which implies that $e_m-V_\theta^\dagger\psi_m^{(\theta)}=0$, in view of the completeness of the set $\F_e=\{e_n(x)\}$. The other equality in (\ref{221}) can be proved similarly. We will show later that $V_\theta$ and its inverse are unbounded.

\vspace{2mm}

{\bf Remark:--} In \cite{dapro} the operator $T_\theta=e^{i\frac{\theta}{2}(a^2-{a^\dagger}^2)}$ was introduced (formally). It rotates the functions as follows: $T_\theta f(x)=e^{i\theta/2}f(e^{i\theta}x)$ and it is (again, formally) self-adjoint and unbounded. It is clear that $V_\theta=T_{\theta/2}$.

\vspace{2mm}

To move from formal to rigorous statements, it is possible to show that 
\be
\langle V_\theta f,g\rangle=\langle  f,V_\theta g\rangle,
\label{222}\en
$\forall f(x), g(x)\in\Lc_e$ and $\forall \theta\in I$. To check this, it is sufficient to prove that $\langle V_\theta e_n,e_m\rangle=\langle  e_n,V_\theta e_m\rangle$, $\forall n,m\geq0$. We first rewrite
$$
\langle V_\theta \,e_n,e_m\rangle=\langle \varphi_n^{(\theta)},e_m\rangle=\int_{\mathbb{R}}\overline{\varphi_n^{(\theta)}(x)}\,e_m(x)\,dx=e^{i\theta/4}\int_{\Gamma_\theta}e_n(z)\,e_m\left(e^{i\theta/2}z\right)\,dz,
$$
while
$$
\langle  e_n,V_\theta e_m\rangle=\int_{\mathbb{R}}\overline{e_n(x)}\,\varphi_m^{(\theta)}(x)\,dx=e^{i\theta/4}\int_{\mathbb{R}}e_n(x)\,e_m\left(e^{i\theta/2}x\right)\,dx.
$$
Hence (\ref{222}) follows if 
$$
\int_{\Gamma_\theta}e_n(z)\,e_m\left(e^{i\theta/2}z\right)\,dz=\int_{\mathbb{R}}e_n(x)\,e_m\left(e^{i\theta/2}x\right)\,dx,
$$
i.e. if we can rotate the path of integration. This is possible and can be checked as in Appendix A, using similar strategies and estimates. This check will not be repeated here,

The equality in (\ref{222}) allows us to prove a resolution of the identity not only on the sets $(\Lc_\psi^{(\theta)},\Lc_\varphi^{(\theta)})$, but also to $(\Lc_\psi^{(\theta)},\Lc_e)$. In other words, we can check that
\be
\sum_{n=0}^\infty\langle f,\varphi_n^{(\theta)}\rangle\langle\psi_n^{(\theta)},g\rangle=\langle f,g\rangle,
\label{223}\en
$\forall f(x)\in\Lc_\psi^{(\theta)}$ and $\forall g(x)\in\Lc_e$. To prove this equality, we observe that for any $f(x)\in\Lc_\psi^{(\theta)}$ we have
$$
f(x)={\sum_n}'c_n\psi_n^{(\theta)}(x)=V_{-\theta}\left({\sum_n}'c_n e_n(x)\right),
$$
where we use the prime in${\sum_n}'$ to stress that this is a finite sum. Then $f(x)\in D(V_\theta)$, and $V_\theta f(x)={\sum_n}'c_n e_n(x)\in\Lc_e$. Moreover, $V_{-\theta}V_\theta f(x)=f(x)$. Now, since $V_\theta f(x), g(x)\in\Lc_e$, using (\ref{222}) we get
$$
\langle f,g\rangle=\langle V_{-\theta}V_\theta f,g\rangle=\langle V_\theta f,V_{-\theta}g\rangle=\sum_k\langle V_\theta f,e_k\rangle\langle e_k,V_{-\theta}g\rangle,
$$
using the Parseval identity for $\F_e$. Now, using (\ref{222}) again, we have $\langle e_k,V_{-\theta}g\rangle = \langle V_{-\theta}e_k,g\rangle= \langle\psi_k^{(\theta)},g\rangle$. To  conclude the proof of (\ref{223}) we need to check that  $\langle V_\theta f,e_k\rangle=\langle f,\varphi_k^{(\theta)}\rangle$. This equality does not follows directly from (\ref{222}), since $f(x)\in\Lc_\psi^{(\theta)}$, so that $f(x)\notin\Lc_e$, in general, but can be easily checked directly, using (\ref{215}):
$$
\langle V_\theta f,e_k\rangle={\sum_n}'\overline{c_n}\langle e_n,e_k\rangle={\sum_n}'\overline{c_n}\delta_{k,n},
$$
which can be different from zero if $k$ appears in ${\sum_n}'$. Otherwise is zero. On the other hand
$$
\langle f,\varphi_k^{(\theta)}\rangle={\sum_n}'\overline{c_n}\langle \psi_n^{(\theta)},\varphi_k^{(\theta)}\rangle={\sum_n}'\overline{c_n}\delta_{k,n},
$$
which produces the same result. 

We conclude this list of resolutions of the identity adding to (\ref{217}) and (\ref{223}) the following one
\be
\sum_{n=0}^\infty\langle f,\psi_n^{(\theta)}\rangle\langle\varphi_n^{(\theta)},g\rangle=\langle f,g\rangle,
\label{224}\en
$\forall f(x)\in\Lc_\varphi^{(\theta)}$ and $\forall g(x)\in\Lc_e$, which can be proved with similar a strategy.

\subsection{Bi-coherent states in $\ltwo$}

In this section we will briefly analyse bi-coherent states associated to our lowering operators $A_\theta$ and $B_\theta^\dagger$ in an Hilbert space context. We will show that this is possible here but not, as we will discuss later, for the IQHO. 

In what follows we will consider two alternative approaches. The first is based on a general theorem which has been proven in \cite{bagproc} and then applied in many situations, \cite{bagspringer}. The second produces our states as solutions of some differential equations.

Let us first recall the theorem in \cite{bagproc}, adapted to the present situation.

\begin{thm}\label{theo1}
	Assume that four strictly positive constants $K_\varphi$, $K_\psi$, $r_\varphi$ and $r_\psi$ exist, together with two strictly positive sequences $M_n(\varphi)$ and $M_n(\psi)$, for which
	\be
	\lim_{n\rightarrow\infty}\frac{M_n(\varphi)}{M_{n+1}(\varphi)}=M(\varphi), \qquad \lim_{n\rightarrow\infty}\frac{M_n(\psi)}{M_{n+1}(\psi)}=M(\psi),
	\label{t21}\en
	where $M(\varphi)$ and $M(\psi)$ could be infinity, and such that, for all $n\geq0$,
	\be
	\|\varphi_n^{(\theta)}\|\leq K_\varphi\,r_\varphi^n M_n(\varphi), \qquad \|\psi_n^{(\theta)}\|\leq K_\psi\,r_\psi^n M_n(\psi).
	\label{t22}\en
	Then the series
	\be
	\varphi^{(\theta)}(z;x)=e^{-\frac{1}{2}|z|^2}\sum_{k=0}^\infty\frac{z^k}{\sqrt{k!}}\varphi_k^{(\theta)}(x),\qquad \psi^{(\theta)}(z;x)=e^{-\frac{1}{2}|z|^2}\sum_{k=0}^\infty\frac{z^k}{\sqrt{k!}}\psi_k^{(\theta)}(x),
	\label{t24}\en
	are convergent in $\mathbb{C}$. Moreover, for all $z\in\mathbb{C}$,
	\be
	A_\theta\varphi^{(\theta)}(z;x)=z\varphi^{(\theta)}(z;x), \qquad B_\theta^\dagger \psi^{(\theta)}(z;x)=z\psi^{(\theta)}(z;x).
	\label{t25}\en
	Finally we have
	\be
	\frac{1}{\pi}\int_{\mathbb{C}}\left<f,\varphi^{(\theta)}\right>\left<\psi^{(\theta)},g\right>\,dz=
	\left<f,g\right>,
	\label{t27}\en
	for all $f(x)\in\Lc_\psi^{(\theta)}$ and $g(x)\in\Lc_\varphi^{(\theta)}\cup\Lc_e$.
	
\end{thm}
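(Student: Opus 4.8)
The plan is to establish the three conclusions of the theorem in the order in which they are stated, since each one feeds the next. First I would prove norm-convergence of the two series in (\ref{t24}) for every fixed $z\in\mathbb{C}$. As the $\varphi_k^{(\theta)}$ lie in $\ltwo$, it suffices to bound the series of norms: by (\ref{t22}), $\sum_k \frac{|z|^k}{\sqrt{k!}}\|\varphi_k^{(\theta)}\|\le K_\varphi\sum_k \frac{(|z|\,r_\varphi)^k}{\sqrt{k!}}\,M_k(\varphi)$. I would apply the ratio test to this last series: the ratio of consecutive terms is $\frac{|z|\,r_\varphi}{\sqrt{k+1}}\,\frac{M_{k+1}(\varphi)}{M_k(\varphi)}$, and by (\ref{t21}) the factor $\frac{M_{k+1}(\varphi)}{M_k(\varphi)}$ tends to $\frac{1}{M(\varphi)}$ (or to $0$ when $M(\varphi)=\infty$), while $\frac{1}{\sqrt{k+1}}\to0$; hence the ratio tends to $0$ and the series converges absolutely for every $z$. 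The same argument with $\psi$ in place of $\varphi$ handles the second series, and completeness of $\ltwo$ yields well-defined vectors $\varphi^{(\theta)}(z;x),\psi^{(\theta)}(z;x)\in\ltwo$.

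Next I would verify the eigenvalue equations (\ref{t25}). Working on the partial sums $S_N(z)=\sum_{k=0}^N \frac{z^k}{\sqrt{k!}}\varphi_k^{(\theta)}$, which are finite combinations in $\Lc_\varphi^{(\theta)}\subset D(A_\theta)$, I would use the lowering relation $A_\theta\varphi_k^{(\theta)}=\sqrt{k}\,\varphi_{k-1}^{(\theta)}$ from (\ref{211}) and shift the summation index to obtain $A_\theta S_N(z)=z\,S_{N-1}(z)$. Letting $N\to\infty$, both $S_N(z)\to e^{|z|^2/2}\varphi^{(\theta)}(z)$ and $A_\theta S_N(z)\to z\,e^{|z|^2/2}\varphi^{(\theta)}(z)$ in $\ltwo$, the image series converging by the very same ratio test since the extra factor $\sqrt{k}$ is harmless. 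Since $A_\theta$ is densely defined with densely defined adjoint $A_\theta^\dagger=B_{-\theta}$, it is closable, and passing to its closure gives $\varphi^{(\theta)}(z)\in D(A_\theta)$ with $A_\theta\varphi^{(\theta)}(z)=z\varphi^{(\theta)}(z)$. The identical argument, based on $B_\theta^\dagger\psi_k^{(\theta)}=\sqrt{k}\,\psi_{k-1}^{(\theta)}$, yields the second equality in (\ref{t25}).

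Finally I would prove the resolution of the identity (\ref{t27}). Fixing $f=\sum_n' d_n\psi_n^{(\theta)}\in\Lc_\psi^{(\theta)}$ (a finite sum) and $g\in\Lc_\varphi^{(\theta)}\cup\Lc_e$, I would expand $\langle f,\varphi^{(\theta)}(z)\rangle$ and $\langle\psi^{(\theta)}(z),g\rangle$ via (\ref{t24}), so that the integrand becomes $e^{-|z|^2}\sum_{k,m}\frac{z^k\,\bar z^m}{\sqrt{k!\,m!}}\,\langle f,\varphi_k^{(\theta)}\rangle\,\langle\psi_m^{(\theta)},g\rangle$. Because $f$ is a finite combination, the first bracket is a polynomial in $z$ times the Gaussian, which licenses the interchange of the $\mathbb{C}$-integration and the summation. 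Passing to polar coordinates $z=\rho e^{i\phi}$, the angular integral forces $k=m$ and the standard Bargmann identity $\frac{1}{\pi}\int_{\mathbb{C}} e^{-|z|^2}z^k\bar z^m\,dz=k!\,\delta_{k,m}$ collapses the double series to $\sum_k\langle f,\varphi_k^{(\theta)}\rangle\langle\psi_k^{(\theta)},g\rangle$. This is precisely the weak resolution of the identity already proven: it equals $\langle f,g\rangle$ by (\ref{217}) when $g\in\Lc_\varphi^{(\theta)}$ and by (\ref{223}) when $g\in\Lc_e$, which establishes (\ref{t27}).

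The routine content is the ratio test of the first step and the Gaussian moment computation of the last. The two places that demand genuine care are the termwise action of the unbounded operators $A_\theta$ and $B_\theta^\dagger$ in (\ref{t25}), for which I rely on closedness, and the interchange of the $\mathbb{C}$-integration with the infinite sum in (\ref{t27}), which is the real obstacle in general. Here, however, that obstacle is disarmed by restricting $f$ to the finite linear span $\Lc_\psi^{(\theta)}$: this truncates one factor to a polynomial, so that the Gaussian $e^{-|z|^2}$ supplies an integrable majorant against the absolutely convergent $g$-series and dominated convergence justifies the exchange without difficulty.
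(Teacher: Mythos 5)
Your first two steps are essentially correct, and they reproduce the expected argument; note, however, that the paper itself never proves this theorem: it is recalled from \cite{bagproc}, and the paper's actual work consists only of verifying the hypotheses (\ref{t22}) for the model at hand, via the Legendre-polynomial identity (\ref{225}) and the asymptotics (\ref{asbeh}). Your ratio test for (\ref{t24}) and your partial-sum, index-shift argument for (\ref{t25}) are fine, with one small caveat: closability only gives $\varphi^{(\theta)}(z)\in D(\overline{A_\theta})$ and $\overline{A_\theta}\,\varphi^{(\theta)}(z)=z\,\varphi^{(\theta)}(z)$; to get the statement for $A_\theta$ itself you should either take $A_\theta$ on its maximal domain (where this differential operator is closed) or accept that (\ref{t25}) holds for the closure.

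The genuine gap is in your proof of (\ref{t27}) in the case $g\in\Lc_e$. For $g\in\Lc_\varphi^{(\theta)}$ both brackets reduce, via biorthonormality (\ref{215}), to finite sums and there is nothing to justify. But for $g\in\Lc_e$ the series $\langle\psi^{(\theta)}(z),g\rangle=e^{-|z|^2/2}\sum_m\frac{\bar z^m}{\sqrt{m!}}\langle\psi_m^{(\theta)},g\rangle$ is genuinely infinite, and the only coefficient bound available from the hypotheses is Cauchy--Schwarz, $|\langle\psi_m^{(\theta)},g\rangle|\le K_\psi\|g\|\,r_\psi^m M_m(\psi)$. The majorant you invoke is then $e^{-|z|^2}P(|z|)\sum_m\frac{(|z|r_\psi)^m}{\sqrt{m!}}M_m(\psi)$, and that sum grows like $e^{r_\psi^2|z|^2/2}$ up to powers of $|z|$. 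In the very application of this paper one has $r_\psi^2=2/\cos\theta$, so the majorant behaves like $e^{|z|^2\left(\frac{1}{\cos\theta}-1\right)}$, which is \emph{not} integrable for any $\theta\neq0$: dominated convergence does not go through ``without difficulty''; your proposed domination fails precisely where the theorem is applied. Two repairs are possible. (i) Interpret the integral as $\lim_{R\to\infty}\int_{|z|\le R}$, pass to polar coordinates, and do the angular integration first: on each circle the $g$-series converges uniformly, the angular integral against the polynomial coming from $f$ kills all but finitely many terms, and the surviving finite sum of radial integrals converges, as $R\to\infty$, to $\sum_k\langle f,\varphi_k^{(\theta)}\rangle\langle\psi_k^{(\theta)},g\rangle$, after which (\ref{223}) finishes the proof. (ii) Replace Cauchy--Schwarz by a real estimate of the coefficients: an explicit Gaussian integral shows that $|\langle\psi_m^{(\theta)},e_k\rangle|$ decays geometrically in $m$, essentially like $\left(\frac{1}{2}\tan(|\theta|/2)\right)^{m/2}$, and with this decay your dominated-convergence argument becomes legitimate; but this information is not contained in (\ref{t21})--(\ref{t22}). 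Either way, the sum-integral exchange is the real content of this last step, not a routine consequence of the Gaussian factor.
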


We only need to prove here that some bounds like those in (\ref{t22}) are satisfied by our functions. Of course, these bounds can only make sense for square-integrable functions. Then they can be satisfied only if we restrict to $\theta\in I$, as we are doing here. Leaving $I$ changes drastically the situation. This case will be discussed in Section \ref{sect3}. As an important side result of our estimates we will be able soon to conclude that the operators $V_\theta$ and $V_\theta^{-1}$ are unbounded, as already stated in Section \ref{sectII1}.

Let us compute now $\|\varphi_n^{(\theta)}\|$, assuming that $\theta\in I$ but $\theta\neq0$. Indeed, the case $\theta=0$ is trivial, since $\|\varphi_n^{(0)}\|=\|e_n\|=1$. Otherwise we have
$$
	\|\varphi_n^{(\theta)}\|^2=\int_{\mathbb{R}}|\varphi_n^{(\theta)}(x)|^2\,dx=\frac{1}{2^n\,n!}\left(\frac{\Omega}{\pi}\right)^{1/2}\int_{\mathbb{R}}H_n\left(e^{-i\theta/2}\sqrt{\Omega}\,x\right) H_n\left(e^{i\theta/2}\sqrt{\Omega}\,x\right)\,e^{-\Omega \cos(\theta)\,x^2}\,dx,
$$
which returns, using the integral 2.20.16.2, page 502, in \cite{prud}
\be
\|\varphi_n^{(\theta)}\|^2=\frac{1}{\sqrt{\cos(\theta)}}\,P_n\left(\frac{1}{\cos(\theta)}\right).
\label{225}\en
Here $P_n(x)$ is the $n$-th Legendre polynomial. Now, using \cite{szego}, Theorem 8.21.1, we know that the asymptotic behaviour (in $n$) of $P_n(x)$, for $x\notin[-1,1]$, is
\be
P_n(x)\simeq(2\pi n)^{-1/2}(x^2-1)^{-1/4}\left(x+\sqrt{x^2-1)}\right)^{n+1/2}.
\label{asbeh}\en
Putting all together we find that, $\forall \theta\in I\setminus\{0\}$,
\be
\|\varphi_n^{(\theta)}\|^2\leq k^{(\theta)}\frac{1}{\sqrt{n}}\left(\frac{2}{\cos(\theta)}\right)^n, \qquad \|\psi_n^{(\theta)}\|^2\leq k^{(-\theta)}\frac{1}{\sqrt{n}}\left(\frac{2}{\cos(\theta)}\right)^n
\label{226}\en
where $k^{(\theta)}$ is a not-so-relevant positive quantity, depending on $\theta$, and the second inequality follows from the identity $\psi_n^{(\theta)}(x)=\varphi_n^{(-\theta)}(x)$. Now, identifying the quantities in (\ref{t22}) is straightforward. We have, for instance $K_\varphi=\sqrt{k^{(\theta)}}$, $r_\varphi=\sqrt{\frac{2}{\cos(\theta)}}$ and $M_n(\varphi)=n^{-1/4}$, so that $M(\varphi)=1$.

We conclude that the series in (\ref{t24}) converge uniformly  for all $z\in\mathbb{C}$, $x\in\mathbb{R}$ and $\theta\in I$ and define square-integrable functions which are eigenstates of $A_\theta$ and $B_\theta^\dagger$. They also resolve the identity on suitable sets, dense in $\ltwo$.

\vspace{2mm}

{\bf Remark:--}
What we have deduced here show that $V_\theta$ and $V_\theta^{-1}$ are unbounded. This is due to the result in (\ref{225}), and to the asymptotic behaviour of $P_n(x)$, (\ref{asbeh}). This last formula shows that, for all $x>1$ (which is our case), $P_n(x)$ diverges with $n$. Hence $V_\theta$ must necessarily be unbounded. Indeed, suppose this is not so and let us call $M_\theta$ the norm of $V_\theta$: $\|V_\theta\|=M_\theta<\infty$. Hence we should have, using (\ref{218}), $\|\varphi_n^{(\theta)}\|=\|V_\theta e_n\|\leq \|V_\theta\|\|e_n\|=M_\theta$, which is finite independently of $n$, which is impossible, as we have seen. Hence $V_\theta$ is unbounded, for all $\theta\in I\setminus\{0\}$. Of course, since $V_\theta^{-1}=V_{-\theta}$, we conclude that $V_\theta^{-1}$ is unbounded as well.

\vspace{2mm}

For our specific annihilation operators it is easy to find a simple explicit form for $\varphi^{(\theta)}(z;x)$ and $\psi^{(\theta)}(z;x)$. We are using explicitly the variable $x$ in these states to stress the fact that we are working in the coordinate space. Solving the differential equation for $A_\theta\varphi^{(\theta)}(z;x)=z\varphi^{(\theta)}(z;x)$, using the differential expression for $A_\theta$, we find
$$
\varphi^{(\theta)}(z;x)=p^{(\theta)}\exp\left\{\sqrt{2\Omega}\,e^{i\theta/2}\,z\,x-\frac{1}{2}e^{i\theta}\Omega \,x^2\right\},
$$
with $p^{(\theta)}$ a suitable normalization, still to be fixed. Moreover, since $\psi^{(\theta)}(z;x)$ is the solution of $B_\theta^\dagger \psi^{(\theta)}(z;x)=z\psi^{(\theta)}(z;x)$, and since $B_\theta^\dagger=A_{-\theta}$, we have
$$
\psi^{(\theta)}(z;x)=\varphi^{(-\theta)}(z;x)=p^{(-\theta)}\exp\left\{\sqrt{2\Omega}\,e^{-i\theta/2}\,z\,x-\frac{1}{2}e^{-i\theta}\Omega \,x^2\right\}.
$$
We can further fix $p^{(\theta)}$ by requiring that $\left<\varphi^{(\theta)},\psi^{(\theta)}\right>=1$. This condition produces a simple gaussian integral, and a possible choice is $p^{(\theta)}=\left(\frac{\Omega}{\pi}\right)^{1/4}e^{i\theta/4}e^{-z_r^2}$, where $z_r=\Re(z)$, the real part of $z$. Summarizing we have
\be
\varphi^{(\theta)}(z;x)=\left(\frac{\Omega}{\pi}\right)^{1/4}\exp\left\{i\,\frac{\theta}{4}-z_r^2+\sqrt{2\Omega}\,e^{i\theta/2}\,z\,x-\frac{1}{2}e^{i\theta}\Omega \,x^2\right\}
\label{227}\en
and
\be
\psi^{(\theta)}(z;x)=\left(\frac{\Omega}{\pi}\right)^{1/4}\exp\left\{-i\,\frac{\theta}{4}-z_r^2+\sqrt{2\Omega}\,e^{-i\theta/2}\,z\,x-\frac{1}{2}e^{-i\theta}\Omega \,x^2\right\},
\label{228}\en
which can be seen as the explicit analytic expressions for the vectors in (\ref{t24}). 
It is evident that, for the allowed values of $\theta$, these functions are both square integrable. It is also clear that, sending $\theta$ to zero, we get
\be
\lim_{\theta,0}\varphi^{(\theta)}(z;x)=\lim_{\theta,0}\psi^{(\theta)}(z;x)=\left(\frac{\Omega}{\pi}\right)^{1/4}\exp\left\{-z_r^2+\sqrt{2\Omega}\,z\,x-\frac{1}{2}\,\Omega \,x^2\right\}=\Phi(z;x),
\label{229}\en
the standard coherent state for the bosonic annihilation operator, with the usual normalization.

\section{The inverted quantum oscillator}\label{sect3}

The Hamiltonian we want to consider in this section is the following:
\be
H=\frac{1}{2}\left(p^2-\Omega^2x^2\right),
\label{31}\en
where, as in (\ref{21}), $\Omega>0$. We have already seen that $H$ can be formally deduced by $H_\theta$ fixing $\theta$ either to $\frac{\pi}{2}$ or to $-\,\frac{\pi}{2}$. For this reason it is natural to define
\be
\varphi_n^{(\pm)}(x)=\varphi_n^{\left(\pm\frac{\pi}{2}\right)}(x)=\frac{e^{\pm i\pi/8}}{\sqrt{2^n\,n!}}\left(\frac{\Omega}{\pi}\right)^{1/4}H_n\left(e^{\pm i\pi/4}\sqrt{\Omega}\,x\right)e^{\mp\,\frac{i}{2}\,\Omega \,x^2}
\label{32}\en
and
\be
\psi_n^{(\pm)}(x)=\psi_n^{\left(\pm\frac{\pi}{2}\right)}(x)=\varphi_n^{(\mp)}(x)=\frac{e^{\mp i\pi/8}}{\sqrt{2^n\,n!}}\left(\frac{\Omega}{\pi}\right)^{1/4}H_n\left(e^{\mp i\pi/4}\sqrt{\Omega}\,x\right)e^{\pm\,\frac{i}{2}\,\Omega \,x^2}.
\label{33}\en
It is clear that
$$
\|\varphi_n^{(\pm)}\|=\|\psi_n^{(\pm)}\|=\infty,
$$
so that none of these functions is square-integrable, of course. However, even if they are not in $\ltwo$, they are connected to the operators $A_\pm$, $B_\pm$ and their adjoints, where
\be
A_\pm=A_{\pm\,\frac{\pi}{2}}=\frac{1}{\sqrt{2\Omega}}\left(e^{\pm i\pi/4}\Omega\,x+i\,e^{\mp i\pi/4}p\right), \,\,\, B_\pm=B_{\pm\,\frac{\pi}{2}}=\frac{1}{\sqrt{2\Omega}}\left(e^{\pm i\pi/4}\Omega\,x-i\,e^{\mp i\pi/4}p\right),
\label{34}\en
and
\be
B_\pm^\dagger=A_\mp, \qquad A_\pm^\dagger=B_\mp.
\label{35}\en
These operators can all be written in terms of the ordinary bosonic operators $c$ and $c^\dagger$ introduced in Section \ref{sect2} as follows:
\be
A_\pm=\frac{c\pm i c^\dagger}{\sqrt{2}}, \qquad B_\pm=\frac{c^\dagger\pm i c}{\sqrt{2}},
\label{36}\en
with $A_\pm^\dagger$ and $B_\pm^\dagger$ deduced as in (\ref{35}). All these operators leave $\scr$ stable. Then we have
\be
[A_\pm,B_\pm]f(x)=f(x),
\label{37}\en
for all $f(x)\in\scr$. However, it is clear that these operators can also be applied to functions which are outside $\scr$, and even outside $\ltwo$. In fact, these operators can also act on 
$\varphi_n^{(\pm)}(x)$ and $\psi_n^{(\pm)}(x)$ and satisfy similar ladder equations as those given in (\ref{211}):
\be
\left\{
\begin{array}{ll}
	A_\pm\,\varphi_0^{(\pm)}(x)=0,\qquad A_\pm\,\varphi_n^{(\pm)}(x)=\sqrt{n}\,\varphi_{n-1}^{(\pm)}(x), \hspace{2cm} n\geq 1,\\
	B_\pm\,\varphi_n^{(\pm)}(x)=\sqrt{n+1}\,\varphi_{n+1}^{(\pm)}(x), \hspace{5cm} n\geq 0,
\end{array}
\right.
\label{38}\en
and
\be
\left\{
\begin{array}{ll}
	B_\pm^\dagger\,\psi_0^{(\pm)}(x)=0,\qquad B_\pm^\dagger\,\psi_n^{(\pm)}(x)=\sqrt{n}\,\psi_{n-1}^{(\pm)}(x), \hspace{2cm} n\geq 1,\\
	A_\pm^\dagger\,\psi_n^{(\pm)}(x)=\sqrt{n+1}\,\psi_{n+1}^{(\pm)}(x), \hspace{5cm} n\geq 0.
\end{array}
\right.
\label{39}\en

Some easy computations show that $H$ in (\ref{31}) can be written in terms of these ladder operators. To simplify the notation we give the results in an operatorial form\footnote{All the operators we are considering in this section can be applied to functions of $\scr$, but not necessarily: they can also act on $\varphi_n^{(\pm)}(x)$ and $\psi_n^{(\pm)}(x)$, and to their linear combinations. The main difference, in this case, is that we are loosing interest to the standard Hilbert space settings usually adopted in quantum mechanics.}. Specializing $H_\theta$ in (\ref{21}) by taking $\theta=\pm\frac{\pi}{2}$ we put
\be
H_\pm=\pm i\Omega\left(B_\pm A_\pm+\frac{1}{2}\,\1\right).
\label{310}\en
We have, as expected
\be
H=H_+=H_-.
\label{311}\en
Just as a check we observe that $$H_+^\dagger=\left(i\Omega\left(B_+ A_++\frac{1}{2}\,\1\right)\right)^\dagger=-i\Omega\left(A_+^\dagger B_+^\dagger+\frac{1}{2}\,\1\right)=-i\Omega\left(B_-A_-+\frac{1}{2}\,\1\right)=H_-,$$
using (\ref{35}). Of course, since $H_-=H_+$, we conclude that $H_+=H_+^\dagger$, at least formally. Of course we have
\be
H_\pm \varphi_n^{(\pm)}(x)=\pm i\Omega \left(n+\frac{1}{2}\right)\varphi_n^{(\pm)}(x),
\label{312}\en
$\forall n\geq0$. Hence the eigenvalues of the IQHO are purely imaginary with both a positive and a negative imaginary part. Of course the functions $\psi_n^{(\pm)}(x)$, which are usually the eigenstates of the adjoint of the {\em original} Hamiltonian, see (\ref{212}), are not so relevant here since the adjoint of $H_+$ is  $H_+$ itself. This is in agreement with the fact that, see (\ref{33}), $\psi_n^{(\pm)}(x)=\varphi_n^{(\mp)}(x)$.

To put the eigenfunctions of $H$ in a more interesting mathematical settings we start defining the following quantities:
\be
\Phi_n^{(\pm)}[f]=\langle\varphi_n^{(\pm)},f\rangle, \qquad \Psi_n^{(\pm)}[g]=\langle\psi_n^{(\pm)},g\rangle,
\label{313}\en
$\forall f(x), g(x)\in\scr$ and $\forall n\geq0$. Here $\langle.,.\rangle$ is the form with extend the ordinary scalar product to {\em compatible pairs}, i.e. to pairs of functions with are, when multiplied together, integrable, but separately they are not (or, at least, one is not). Compatible pairs have been considered in several contributions in the literature. We refer to \cite{pip,anttra} for their appearance in {\em partial inner product spaces}, and to \cite{bagspringer} for some consideration closer (in spirit) to what we are doing here.

The next steps consist in showing that $\Phi_n^{(\pm)}[f]$ and  $\Psi_n^{(\pm)}[g]$ are well defined, linear, and continuous in the natural topology $\tau_\Sc$ in $\scr$. In few words, they are tempered distributions,  $\Phi_n^{(\pm)}, \Psi_n^{(\pm)}\in\Sc'(\mathbb{R})$. In the following we will concentrate on $\Phi_n^{(+)}$ since $\Phi_n^{(-)}$ can be treated in the same way, and since the properties of $\Psi_n^{(\pm)}$ follow from those of $\Phi_n^{(\pm)}$ because of the identity $\psi_n^{(\pm)}(x)=\varphi_n^{(\mp)}(x)$ and of the definition (\ref{313}).

To check that $\Phi_n^{(+)}[f]$ is well defined, we observe that
\be
\left|\Phi_n^{(+)}[f]\right|\leq \frac{(\Omega/\pi)^{1/4}}{\sqrt{2^n\,n!}}\int_{\mathbb{R}}\left|H_n\left(e^{ i\pi/4}\sqrt{\Omega}\,x\right) f(x)\right|\,dx\leq M_n \sup_{x\in\mathbb{R}}(1+|x|)^{n+2}|f(x)|.
\label{314}\en
Here we have defined 
$$
M_n=\frac{(\Omega/\pi)^{1/4}}{\sqrt{2^n\,n!}}\int_{\mathbb{R}}\frac{|H_n(e^{ i\pi/4}\sqrt{\Omega}\,x)|}{(1+|x|)^{n+2}}\,dx.
$$
As we see, in this computation we have multiplied and divided the original integrand function $\left|H_n\left(e^{ i\pi/4}\sqrt{\Omega}\,x\right) f(x)\right|$ for $(1+|x|)^{n+2}$. In this way, since the ratio $\frac{|H_n(e^{ i\pi/4}\sqrt{\Omega}\,x)|}{(1+|x|)^{n+2}}$ has no singularity and decreases to zero for $|x|$ divergent as $|x|^{-2}$, we can conclude that $M_n$ is finite (and positive). 
Before going back to (\ref{314}), we also observe that
$$
\sup_{x\in\mathbb{R}}(1+|x|)^{n+2}|f(x)|=\sup_{x\in\mathbb{R}}\sum_{k=0}^{n+2} \left(
\begin{array}{c}
	n+2  \\
	k  \\
\end{array}
\right) |x|^k|f(x)|=\sum_{k=0}^{n+2} \left(
\begin{array}{c}
	n+2  \\
	k  \\
\end{array}
\right)p_{k,0}(f),
$$
where $p_{k,0}(.)$ is one of the seminorms defining the topology $\tau_\Sc$, see \cite{reed} for instance: $p_{k,l}(f)=\sup_{x\in\mathbb{R}}|x|^k|f^{(l)}(x)|$, $k,l=0,1,2,\ldots$. Of course, all these seminorms are finite for all $f(x)\in\scr$, and $\scr$ is complete with respect to $\tau_\Sc$.

Summarizing we have
$$
\left|\Phi_n^{(+)}[f]\right|\leq M_n\sum_{k=0}^{n+2} \left(
\begin{array}{c}
	n+2  \\
	k  \\
\end{array}
\right)p_{k,0}(f),
$$
which allows us to conclude that $\Phi_n^{(+)}[f]$ is well defined for all $f(x)\in\scr$, as we had to check. 

The linearity of  $\Phi_n^{(+)}$ is clear: $\Phi_n^{(+)}[\alpha f+\beta g]=\alpha\Phi_n^{(+)}[f]+\beta\Phi_n^{(+)}[g]$, for all $f(x), g(x)\in\scr$ and $\alpha,\beta\in\mathbb{C}$.

We are left with the proof that $\Phi_n^{(+)}$ is continuous. For that we have to consider a sequence of functions $\{f_k(x)\in\scr\}$, $\tau_\Sc$-convergent to $f(x)\in\scr$, and check that $\Phi_n^{(+)}[f_k]\rightarrow\Phi_n^{(+)}[f]$ for $k\rightarrow\infty$ in $\mathbb{C}$, for all fixed $n$. The proof of this fact is based on the lemma below, whose proof (quite probably well known) is not easy to find and it is given in Appendix B for completeness.

\begin{lemma}\label{lemma1}
	Given a sequence of functions $\{f_k(x)\in\scr\}$, $\tau_\Sc$-convergent to $f(x)\in\scr$, it follows that $|x|^l|f_k(x)|$ converges, in the norm $\|.\|$ of $\ltwo$, to $|x|^l|f(x)|$, $\forall l\geq0$.
\end{lemma}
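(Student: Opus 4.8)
The plan is to reduce the claim about absolute values to an ordinary weighted $L^2$-estimate and then exploit the Schwartz seminorms $p_{m,0}$. First I would invoke the pointwise reverse triangle inequality $\big|\,|f_k(x)|-|f(x)|\,\big|\leq|f_k(x)-f(x)|$ to bound the quantity of interest:
$$\norm{\,|x|^l|f_k| - |x|^l|f|\,}^2 = \int_{\mathbb{R}} |x|^{2l}\big(|f_k(x)|-|f(x)|\big)^2\,dx \leq \int_{\mathbb{R}} |x|^{2l}\,|f_k(x)-f(x)|^2\,dx.$$
Thus it suffices to show $\int_{\mathbb{R}}|x|^{2l}|g_k(x)|^2\,dx\to0$, where $g_k=f_k-f\in\scr$ converges to $0$ in $\tau_\Sc$, i.e. $p_{m,0}(g_k)=\sup_{x}|x|^m|g_k(x)|\to0$ for every $m\geq0$.

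The key step is to turn the uniform (seminorm) control into an integrable pointwise bound by borrowing two extra powers of $x$. Writing
$$(1+x^2)\,|x|^l|g_k(x)| = |x|^l|g_k(x)| + |x|^{l+2}|g_k(x)| \leq p_{l,0}(g_k) + p_{l+2,0}(g_k),$$
I obtain the decay estimate $|x|^l|g_k(x)|\leq\big(p_{l,0}(g_k)+p_{l+2,0}(g_k)\big)/(1+x^2)$. Squaring and integrating then gives
$$\int_{\mathbb{R}}|x|^{2l}|g_k(x)|^2\,dx \leq \big(p_{l,0}(g_k)+p_{l+2,0}(g_k)\big)^2\int_{\mathbb{R}}\frac{dx}{(1+x^2)^2} = \frac{\pi}{2}\big(p_{l,0}(g_k)+p_{l+2,0}(g_k)\big)^2,$$
the last integral being the finite constant $\pi/2$, independent of $k$. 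Since $g_k\to0$ in $\tau_\Sc$, both seminorms $p_{l,0}(g_k)$ and $p_{l+2,0}(g_k)$ tend to $0$ as $k\to\infty$, so the right-hand side vanishes in the limit and the claim $\norm{\,|x|^l|f_k| - |x|^l|f|\,}\to0$ follows for every fixed $l\geq0$.

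I do not expect a serious obstacle here: the statement is essentially bookkeeping with the Schwartz seminorms. The one point requiring care is that we compare the absolute values $|f_k|$ and $|f|$ rather than $f_k$ and $f$ themselves, but the reverse triangle inequality disposes of this at once. The only genuine idea is the weight trick in the second step—one must pick up enough extra powers of $x$ (here two, matching the integrability threshold of $(1+x^2)^{-2}$) so that the uniform seminorm bounds translate into a convergent integral; using $p_{l+2,0}$ rather than merely $p_{l,0}$ is precisely what supplies the needed decay at infinity.
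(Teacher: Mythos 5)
Your proof is correct and takes essentially the same route as the paper's Appendix B: the key step in both is the weight trick of converting the uniform control given by the Schwartz seminorms $p_{m,0}$ into an $\ltwo$ bound by inserting an integrable weight, after which $\tau_\Sc$-convergence kills the constant. The only cosmetic differences are that you use the weight $(1+x^2)^{-1}$, which needs just the two seminorms $p_{l,0}$ and $p_{l+2,0}$, where the paper uses $(1+|x|)^{-1}$ together with a binomial expansion of $(1+|x|)^{l+1}$ into a full sum of seminorms, and that you make explicit, via the reverse triangle inequality, the reduction from the stated convergence of $|x|^l|f_k|$ to that of $x^l f_k$, a point the paper leaves implicit.
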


We have
$$
\left|\Phi_n^{(+)}[f_k-f]\right|=\left|\langle\varphi_n^{(+)},f_k-f\rangle\right|=\left|\left\langle\frac{\varphi_n^{(+)}}{(1+|x|)^{n+1}},(1+|x|)^{n+1}(f_k-f)\right\rangle\right|,
$$
with an obvious manipulation.
Now, since $\frac{\varphi_n^{(+)}(x)}{(1+|x|)^{n+1}}$ and $(1+|x|)^{n+1}(f_k(x)-f(x))$ are both in $\ltwo$, for all $n,k$, we can use the Schwarz inequality and we get
$$
\left|\Phi_n^{(+)}[f_k-f]\right|\leq\left\|\frac{\varphi_n^{(+)}}{(1+|x|)^{n+1}}\right\|\left\|(1+|x|)^{n+1}(f_k-f)\right\|\rightarrow0
$$ 
when $k\rightarrow\infty$, for all fixed $n\geq0$ because of Lemma \ref{lemma1}.

In the same way, we can prove that $\Phi_n^{(-)}$ is continuous. Hence, as already observed, $\Psi_n^{(\pm)}$ are continuous, too. Then $\Phi_n^{(\pm)},\Psi_n^{(\pm)}\in\Sc'(\mathbb{R})$.

The role of tempered distributions in the context of the IQHO is further clarified by the following result.

\begin{thm}\label{theorem3}
	For each fixed $n\geq0$ the vector $\varphi_n^{(\pm)}(x)$ is a weak limit of $\varphi_n^{(\theta)}(x)$, for $\theta\rightarrow\pm\frac{\pi}{2}$:
	\be
	\varphi_n^{(\pm)}(x)=w-\lim_{\theta,\pm\frac{\pi}{2}}\varphi_n^{(\theta)}(x).
	\label{315}\en
	Analogously,
	\be
	\psi_n^{(\pm)}(x)=w-\lim_{\theta,\pm\frac{\pi}{2}}\psi_n^{(\theta)}(x).
	\label{316}\en
	\end{thm}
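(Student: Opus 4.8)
The plan is to unwind the definition of weak convergence in $\Sc'(\mathbb{R})$: I must show that for every test function $f(x)\in\scr$ one has $\langle\varphi_n^{(\theta)},f\rangle\to\langle\varphi_n^{(+)},f\rangle=\Phi_n^{(+)}[f]$ as $\theta\to\frac{\pi}{2}$ from inside $I$ (and analogously as $\theta\to-\frac{\pi}{2}$). For $\theta\in I$ the vector $\varphi_n^{(\theta)}$ is square-integrable, so the pairing is the genuine integral $\int_{\mathbb{R}}\overline{\varphi_n^{(\theta)}(x)}\,f(x)\,dx$, and I would first record the pointwise convergence of the integrand to $\overline{\varphi_n^{(+)}(x)}\,f(x)$. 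This is immediate from (\ref{29}) and (\ref{32}): the normalization $N^{(\theta)}=(\Omega/\pi)^{1/4}e^{i\theta/4}$ from (\ref{214}), the Hermite factor $H_n(e^{i\theta/2}\sqrt{\Omega}\,x)$ and the Gaussian $e^{-\frac{1}{2}\Omega e^{i\theta}x^2}$ all depend continuously on $\theta$ and take at $\theta=\frac{\pi}{2}$ precisely the values appearing in (\ref{32}).

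The real work is to legitimate the exchange of limit and integral, for which I would invoke dominated convergence with a majorant independent of $\theta$. The key point is that $H_n$ is a polynomial of degree $n$, so writing $H_n(w)=\sum_{k=0}^n a_k w^k$ with real coefficients and using $|e^{i\theta/2}\sqrt{\Omega}\,x|=\sqrt{\Omega}\,|x|$ gives
\[
\left|H_n\!\left(e^{i\theta/2}\sqrt{\Omega}\,x\right)\right|\leq\sum_{k=0}^n |a_k|\,(\sqrt{\Omega}\,|x|)^k=:Q_n(|x|),
\]
a fixed polynomial in $|x|$ of degree $n$, independent of $\theta$. Since $|N^{(\theta)}|=(\Omega/\pi)^{1/4}$ and, on the approach $\theta\in\left(\frac{\pi}{2}-\epsilon,\frac{\pi}{2}\right)$, one has $\cos\theta>0$ and hence $\big|e^{-\frac{1}{2}\Omega e^{i\theta}x^2}\big|=e^{-\frac{1}{2}\Omega\cos(\theta)\,x^2}\leq1$, the integrand is bounded uniformly in $\theta$ by
\[
\left|\varphi_n^{(\theta)}(x)\,f(x)\right|\leq\frac{(\Omega/\pi)^{1/4}}{\sqrt{2^n\,n!}}\,Q_n(|x|)\,|f(x)|.
\]
This majorant lies in $L^1(\mathbb{R})$ because $f\in\scr$ decays faster than any inverse polynomial, so $Q_n(|x|)|f(x)|$ is integrable; choosing the seminorm bound $|f(x)|\leq C(1+|x|)^{-(n+2)}$, exactly as in the estimate leading to (\ref{314}), makes this explicit. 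Dominated convergence then yields $\langle\varphi_n^{(\theta)},f\rangle\to\langle\varphi_n^{(+)},f\rangle$ for every $f\in\scr$, which is (\ref{315}) in the $+$ case.

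The only delicate feature — and the step I expect to be the heart of the matter — is precisely the degeneration of the Gaussian: as $\theta\to\frac{\pi}{2}$ the decay factor $e^{-\frac{1}{2}\Omega\cos(\theta)x^2}$ tends to the constant $1$, so $\varphi_n^{(\theta)}$ genuinely leaves $\ltwo$ in the limit and no $L^2$ bound survives. The argument circumvents this by transferring all the required decay onto the test function $f$, using only the trivial inequality $e^{-\frac{1}{2}\Omega\cos(\theta)x^2}\leq1$, valid throughout the approach from inside $I$ where $\cos\theta>0$; this is also why the statement must be read as a weak, distributional limit rather than a limit in $\ltwo$. The case $\theta\to-\frac{\pi}{2}$ is identical, with $\theta$ replaced by $-\theta$ and $\varphi_n^{(+)}$ by $\varphi_n^{(-)}$. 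Finally, (\ref{316}) follows with no extra effort from the identity $\psi_n^{(\theta)}(x)=\varphi_n^{(-\theta)}(x)$ recorded in (\ref{210}): it identifies $\psi_n^{(\pm)}$ with $\varphi_n^{(\mp)}$ and turns the limit $\theta\to\pm\frac{\pi}{2}$ for $\psi$ into the already established limit $-\theta\to\mp\frac{\pi}{2}$ for $\varphi$.
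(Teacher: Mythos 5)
Your proof is correct, and it is a genuinely more direct variant of the paper's argument. The key ingredients are the same in both: the Hermite factor is dominated by a fixed, $\theta$-independent polynomial $Q_n(|x|)$ (by taking absolute values of the coefficients), the Gaussian factor is discarded via $\bigl|e^{-\frac{1}{2}\Omega e^{i\theta}x^2}\bigr|=e^{-\frac{1}{2}\Omega\cos(\theta)x^2}\leq1$ on $I$, and the limit is justified by dominated convergence. The difference is structural: you apply dominated convergence once, directly to the $L^1(\mathbb{R})$ integrand $\overline{\varphi_n^{(\theta)}(x)}f(x)$, transferring all decay onto the test function $f$. The paper instead mirrors the technique of its Lemma \ref{lemma1}: it introduces the weight $(1+|x|)^{n+1}$, splits the pairing by the Schwarz inequality as $\bigl|\langle\varphi_n^{(+)}-\varphi_n^{(\theta)},f\rangle\bigr|\leq\|\chi_n^{(\theta)}\|\,\|(1+|x|)^{n+1}f\|$ with $\chi_n^{(\theta)}(x)=(\varphi_n^{(+)}(x)-\varphi_n^{(\theta)}(x))/(1+|x|)^{n+1}$, and then applies dominated convergence in $L^2$ to show $\|\chi_n^{(\theta)}\|\rightarrow0$. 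Your route is shorter and needs no auxiliary weighted function; the paper's route buys a slightly stronger conclusion, namely an estimate of the pairing that is uniform over all $f$ with $\|(1+|x|)^{n+1}f\|$ bounded (the $\theta$-dependence is isolated in the single quantity $\|\chi_n^{(\theta)}\|$, which tends to zero independently of $f$), and it keeps the proof formally parallel to the continuity argument for the functionals $\Phi_n^{(\pm)}$. Your handling of the $-$ case and of (\ref{316}) via $\psi_n^{(\theta)}(x)=\varphi_n^{(-\theta)}(x)$ coincides with the paper's.
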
 
\begin{proof}
	It is sufficient to prove that $\varphi_n^{(+)}(x)=w-\lim_{\theta,+\frac{\pi}{2}}\varphi_n^{(\theta)}(x)$, i.e. that
	\be
	\langle \varphi_n^{(+)}-\varphi_n^{(\theta)},f\rangle\rightarrow0
	\label{317}\en
	for $\theta\rightarrow\frac{\pi}{2}$ for all fixed $n\geq0$ and for $f(x)\in\scr$. The proof follows a similar strategy as that of Lemma \ref{lemma1}. First of all we observe that, 
	$$
	|\varphi_n^{(+)}(x)-\varphi_n^{(\theta)}(x)|\leq \frac{(\Omega/\pi)^{1/4}}{\sqrt{2^n\,n!}}\left(\left|H_n(e^{i\pi/4}\sqrt{\Omega}\, x)\right|+\left|H_n(e^{i\theta/2}\sqrt{\Omega}\, x)\right|\right)\leq\frac{(\Omega/\pi)^{1/4}}{\sqrt{2^n\,n!}}p_n(x),
	$$
	where $p_n(x)$ is a suitable polynomial in $|x|$ of degree $n$, independent of $\theta$, whose expression is not particularly relevant\footnote{To clarify this aspect of the proof, let us consider, for instance $H_3(x)=8x^3-12x$. Hence $|H_3(x)|\leq 8|x|^3+12|x|$ and, therefore $|H_3(e^{i\theta/2}\sqrt{\Omega}\, x)|\leq 8(\Omega)^{3/2}|x|^3+12\sqrt{\Omega}|x|$, for instance.}.
	This estimate implies that the function
	$$
	\chi_n^{(\theta)}(x)=\frac{\varphi_n^{(+)}(x)-\varphi_n^{(\theta)}(x)}{(1+|x|)^{n+1}}
	$$
	is square integrable for all fixed $n$ and for all $\theta\in I$. Therefore, since $(1+|x|)^{n+1}f(x)\in\ltwo$ as well, due to the fact that $f(x)\in\scr$, we have
	$$
	\left|\langle \varphi_n^{(+)}-\varphi_n^{(\theta)},f\rangle\right|=	\left|\left\langle \frac{\varphi_n^{(+)}-\varphi_n^{(\theta)}}{(1+|x|)^{n+1}},(1+|x|)^{n+1}f\right\rangle\right|\leq \|\chi_n^{(\theta)}\|\,\|(1+|x|)^{n+1}f\|,
	$$
	using the Schwarz inequality. Now, to conclude as in (\ref{317}), it is sufficient to show that 
	$\|\chi_n^{(\theta)}\|\rightarrow0$ when $\theta\rightarrow\frac{\pi}{2}$, i.e. that
	$$
	\lim_{\theta,\frac{\pi}{2}}\int_{\mathbb{R}}|\chi_n^{(\theta)}(x)|^2\,dx=0.
	$$
	This is a consequence of the Lebesgue dominated convergence theorem, since it is clear first that $\lim_{\theta,\frac{\pi}{2}}\chi_n^{(\theta)}(x)=0$ a.e. in $x$ and since $|\chi_n^{(\theta)}(x)|^2$ is bounded by an $\Lc^1(\mathbb{R})$ function, in view of what we have shown before. Indeed we have
	$$
	|\chi_n^{(\theta)}(x)|^2=\frac{|\varphi_n^{(+)}(x)-\varphi_n^{(\theta)}(x)|^2}{(1+|x|)^{2n+2}}\leq \frac{(\Omega/\pi)^{1/2}}{2^n\,n!}\frac{p^2_n(x)}{(1+|x|)^{2n+2}},
	$$
	which goes to zero for $|x|$ divergent as $|x|^{-2}$.

\end{proof}

Summarizing the results proved so far we can write that {\em the eigenstates of the IQHO are not square integrable. They define tempered distributions and can be obtained as weak limits of the eigenstates of the Swanson-like Hamiltonian introduced in (\ref{21}).} Hence, the IQHO provides an interesting example of what we have called {\em weak pseudo-bosons} in some recent works, \cite{bagJPA2020,bagspringer}, i.e., essentially, ladder operators obeying a deformed version of the CCR, see (\ref{37}), which produce an interesting functional structure {\em outside} $\ltwo$.

\subsection{Bi-coherent states for the IQHO}

Since we have, for the IQHO, weak pseudo-bosonic operators, and annihilation operators in particular, we can think to construct coherent states of some kind for them. Of course, Theorem \ref{theo1} is no longer available, since the estimates in (\ref{t22}) make no sense, here. This is because, as we have already stressed, $\|\varphi_n^{(\pm)}\|=\|\psi_n^{(\pm)}\|=\infty$. However, as discussed in connection with  other systems, it is still possible to give a slightly different version of this theorem, requiring some bounds not on the norms of $\varphi_n^{(\pm)}(x)$ and $\psi_n^{(\pm)}(x)$, but on some scalar product related to these functions. For instance, if an inequality like the following, analogous to the one in (\ref{t22}) holds,
$$
|\langle \varphi_n^{(+)},f\rangle|\leq A_{\varphi,f}\,r_{\varphi,f}^n M_n(\varphi,f),
$$
for all $f(x)$ in a sufficiently large set of functions, $\E\subset\ltwo$, we can still check that the series
$$
e^{-\frac{1}{2}|z|^2}\sum_{k=0}^\infty\frac{z^k}{\sqrt{k!}}\langle\varphi_k^{(+)},f\rangle
$$ 
converges for all $z\in\mathbb{C}$ and for all $f(x)\in\E$. A similar estimate could be used for $\varphi_k^{(-)}$, of course, producing similar results. In other words, Theorem \ref{theo1} could be extended to our situation. However, we prefer to adopt a different, and possibly more direct, strategy. In particular, once we have seen that the eigenstates of $H=H_{\pm}$ can be found as weak limit of the eigenstates of $H_\theta$, see Theorem \ref{theorem3}, it is natural to check if the weak limit of the bicoherent states in (\ref{227}) and (\ref{228}) produces, or not, some interesting state. In particular, for instance, does $w-\lim_{\theta,\frac{\pi}{2}}\varphi^{(\theta)}(z;x)$ exist?  and is it an eigenstate of $A_+$ with eigenvalue $z$? Is it a tempered distributions? Does any resolution of the identity holds, in this case?

These are the main questions we want to answer. But first of all, it is convenient to start with an useful remark. We put, in analogy with (\ref{32}) and (\ref{33}),
\be
\varphi^{(\pm)}(z;x)=\varphi^{(\pm\frac{\pi}{2})}(z;x)=\left(\frac{\Omega}{\pi}\right)^{1/4}\exp\left\{\pm i\,\frac{\pi}{8}-z_r^2+\sqrt{2\Omega}\,e^{\pm i\pi/4}\,z\,x\mp \frac{i}{2}\,\Omega \,x^2\right\}
\label{318}\en
and
\be
\psi^{(\pm)}(z;x)=\psi^{(\pm\frac{\pi}{2})}(z;x)=\left(\frac{\Omega}{\pi}\right)^{1/4}\exp\left\{\mp i\,\frac{\pi}{8}-z_r^2+\sqrt{2\Omega}\,e^{\mp i\pi/4}\,z\,x\pm \frac{i}{2}\,\Omega \,x^2\right\},
\label{319}\en
so that, not surprisingly, 
\be
\varphi^{(\pm)}(z;x)=\psi^{(\mp)}(z;x).
\label{320}\en
It is clear from these formulas that $|\varphi^{(\pm)}(z;x)|$ and $|\psi^{(\pm)}(z;x)|$ are not polynomials in $|x|$, contrarily  to $|\varphi_n^{(\pm)}(x)|$ and $|\psi_n^{(\pm)}(x)|$. For this reason, $\varphi^{(\pm)}(z;x)$ and $\psi^{(\pm)}(z;x)$ cannot give rise to tempered distributions. Still, how we will show now, they produce certain continuous linear functionals on a rather large set of functions of {\em generalized test functions}, $\V_\rho$. This will be done, at a certain general level, in the next section. After giving some general definitions and results on $\V_\rho$ and its dual, we will come back to the states in (\ref{318}) and (\ref{319}) and to their features in Section \ref{sectbacktoIQHO}.

\subsection{Generalized test functions: general results}

Let $\rho(x)$ be a strictly positive, L-measurable, function, and let us introduce the set
\be
\V_\rho=\left\{f(x)\in\ltwo:\,\rho(x)f(x)\in\ltwo\right\}.
\label{321}\en
First of all we observe that, if $\rho(x)\in\Lc^\infty(\mathbb{R})$, then $\V_\rho=\ltwo$. Indeed, in this case, if we take any $g(x)\in\ltwo$, then $\rho(x)g(x)\in\ltwo$ as well:
$$
\int_{\mathbb{R}}|\rho(x)g(x)|^2\,dx\leq (\sup_{x\in\mathbb{R}}|\rho(x)|)^2\int_{\mathbb{R}}|g(x)|^2\,dx=\|\rho\|^2_\infty\|g\|^2<\infty.
$$
Hence $g(x)\in\V_\rho$: $\ltwo\subseteq\V_\rho$. Vice-versa, if $g(x)\in\V_\rho$, $g(x)\in\ltwo$ by definition. Hence $\V_\rho\subseteq\ltwo$, and our claim follows.

Second, if $\rho(x)$ is continuous, not necessarily bounded, $\V_\rho$ is dense in $\ltwo$. This is because, for any such $\rho(x)$, we have the following inclusion: $D(\mathbb{R})\subseteq\V_\rho$, where  $D(\mathbb{R})$ is the set of all compactly-supported $C^\infty$ functions, which is dense in $\ltwo$.  Hence $\V_\rho$ is also dense. To check this, we take $h(x)\in D(\mathbb{R})$. Hence $h(x)\in\ltwo$, clearly. But now, since $\rho(x)$ is continuous by assumption, $\rho(x)h(x)$ is continuous with bounded support, so that it has a maximum. This implies that $\|\rho h\|<\infty$.

Already these cases show that $\V_\rho$ is a significantly large set, in many interesting situations. We can endow $\V_\rho$ with a topology, $\tau_\rho$, as follows:

\begin{defn} Given a sequence $\{f_n(x)\in\V_\rho\}$, we say that this is $\tau_\rho$-convergent if (i) $\{f_n(x)\in\V_\rho\}$ is $\|.\|$-Cauchy and if (ii) $\{\rho(x)f_n(x)\in\V_\rho\}$ is $\|.\|$-Cauchy.
\end{defn}

We can prove the following result, suggesting that $\tau_\rho$ is a {\em good topology} for us.

\begin{lemma}\label{lemma5}
	Suppose that $\rho^{-1}(x)\in\Lc^\infty(\mathbb{R})$. Then $\V_\rho$ is closed with respect to $\tau_\rho$.
\end{lemma}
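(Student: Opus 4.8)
The plan is to read the statement as an assertion of completeness: I must show that whenever a sequence $\{f_n(x)\}\subset\V_\rho$ satisfies the two Cauchy conditions defining $\tau_\rho$-convergence, its limit exists and again lies in $\V_\rho$. (Since $\V_\rho$ carries the topology $\tau_\rho$, being ``closed with respect to $\tau_\rho$'' can only mean that a $\tau_\rho$-Cauchy sequence cannot converge to something outside $\V_\rho$.) So suppose $\{f_n\}$ is $\tau_\rho$-convergent. By condition (i), $\{f_n\}$ is $\|\cdot\|$-Cauchy in $\ltwo$, and by condition (ii), $\{\rho f_n\}$ is $\|\cdot\|$-Cauchy in $\ltwo$. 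Since $\ltwo$ is complete, there exist $f,g\in\ltwo$ with $f_n\to f$ and $\rho f_n\to g$, both in $\ltwo$.

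The heart of the argument is to identify $g$ with $\rho f$. This is exactly where the hypothesis $\rho^{-1}(x)\in\Lc^\infty(\mathbb{R})$ enters decisively: multiplication by $\rho^{-1}$ is then a bounded operator on $\ltwo$, of norm at most $\|\rho^{-1}\|_\infty$. Applying it to the convergent sequence $\rho f_n\to g$ gives
$$
f_n=\rho^{-1}(\rho f_n)\longrightarrow \rho^{-1}g \quad\text{in }\ltwo .
$$
Comparing with $f_n\to f$ and using uniqueness of $\ltwo$-limits, I conclude $f=\rho^{-1}g$ almost everywhere. Multiplying through by $\rho$, which is finite and strictly positive a.e. so that $\rho\,\rho^{-1}=1$ a.e., yields $\rho f=g$ a.e. Since $g\in\ltwo$, this shows $\rho f\in\ltwo$, i.e. $f\in\V_\rho$; moreover $f_n\to f$ in $\ltwo$ while $\rho f_n\to\rho f$ in $\ltwo$, so $\{f_n\}$ is $\tau_\rho$-convergent to an element of $\V_\rho$, as required.

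I expect the only genuine obstacle to be the identification $g=\rho f$. A priori the two limits $f$ and $g$ arise independently from two different Cauchy conditions, and without control on $\rho^{-1}$ one cannot pass from $\rho f_n\to g$ back to any statement about $f$ itself: the multiplication operator by $\rho$ is in general unbounded, so it does not commute with $\ltwo$-limits, and $g$ need not equal $\rho f$. The essential boundedness of $\rho^{-1}$ circumvents this precisely because it undoes the multiplication in a \emph{continuous} way. A minor point to dispatch carefully is the a.e. cancellation $\rho\,\rho^{-1}=1$, which is justified by $\rho^{-1}\in\Lc^\infty(\mathbb{R})$ (giving the lower bound $\rho\geq\|\rho^{-1}\|_\infty^{-1}>0$ a.e.) together with the standing assumption that $\rho$ is finite and strictly positive.
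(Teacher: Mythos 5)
Your proof is correct and follows essentially the same route as the paper: both arguments use completeness of $\ltwo$ to extract the two limits $f$ and $g$, then exploit boundedness of multiplication by $\rho^{-1}$ (via the estimate $\|f_n-\rho^{-1}g\|\leq\|\rho^{-1}\|_\infty\|\rho f_n-g\|$) to identify $g=\rho f$, hence $f\in\V_\rho$. Your phrasing through uniqueness of $\ltwo$-limits is just a cosmetic variant of the paper's triangle-inequality step.
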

\begin{proof}
	Because of the definition of $\tau_\rho$, given a sequence $\{f_n(x)\in\V_\rho\}$ which is $\tau_\rho$-convergent, there exist two functions in $\ltwo$, $f(x)$ and $g(x)$, such that
	$$
	f(x)=\|.\|-\lim_{n,\infty}f_n(x), \qquad 	g(x)=\|.\|-\lim_{n,\infty}\rho(x)f_n(x).
	$$
	We only need to check that $g(x)=\rho(x)f(x)$. For that we start noticing that
	$$
	\|f_n-\rho^{-1}g\|\leq\|\rho^{-1}\|_\infty\|\rho f_n-g\|\rightarrow0,
	$$
	when $n\rightarrow\infty$, because of our assumption on $\rho^{-1}(x)$ and of the definition of $g(x)$.
	Then we have
	$$
	\|f-\rho^{-1}g\|\leq \|f-f_n\|+\|f_n-\rho^{-1}g\|\rightarrow0,
	$$
	when $n\rightarrow\infty$. Hence $f(x)=\rho^{-1}(x)g(x)$, so that $g(x)=\rho(x)f(x)$.

\end{proof}

\vspace{2mm}

{\bf Remark:--} This Lemma shows also that, if $\{f_n(x)\in\V_\rho\}$ is $\tau_\rho$-convergent to $f(x)$, then $\{f_n(x)\in\V_\rho\}$ is also norm convergent to $f(x)$. However, the opposite implication does not hold, without further assumptions on $\rho(x)$. For instance, if $\rho(x)\in\Lc^\infty(\mathbb{R})$, then using the same estimate we deduced before, $\|\rho g\|\leq\|\rho\|_\infty\|g\|$, $g(x)\in\ltwo$, it is clear that if $\{f_n(x)\in\V_\rho\}$ is norm convergent to $f(x)$, it also converges to $f(x)$ in $\tau_\rho$. But, if $\rho(x)\notin\Lc^\infty(\mathbb{R})$, this implication is not automatic.

\vspace{2mm}

From now on we restrict to real $\rho(x)$ satisfying the assumption of Lemma \ref{lemma5}. This is indeed enough for our application. We can now introduce a sort of {\em dual space} of $\V_\rho$, $\Theta_\rho$, as follows:
\be
\Theta_\rho=\{\Phi(x), \mbox{L-measurable: } \Phi(x)\rho^{-1}(x)\in\ltwo\}.
\label{322}\en
It is possible to show that every element of $\Theta_\rho$ defines a continuous linear functional on $\V_\rho$, that is, an  element of $\V_\rho'$. Stated differently, $\Theta_\rho\subseteq\V_\rho'$.

The proof of this claim is easy. Indeed, since
\be
F_\Phi[f]=\langle\Phi,f\rangle=\langle\Phi\rho^{-1},f\rho\rangle,
\label{323}\en
and since, $\forall\Phi(x)\in\Theta_\rho$ and $f(x)\in\V_\rho$ we have $\Phi(x)\rho^{-1}(x),f(x)\rho(x)\in\ltwo$, $F_\Phi[f]$ is well defined and
$$
|F_\Phi[f]|\leq \|\Phi\rho^{-1}\|\,\|f\rho\|.
$$
Linearity of $F_\Phi$ is clear. As for its continuity, let $\{f_n(x)\in\V_\rho\}$ be a sequence $\tau_\rho$ convergent to $f(x)\in\V_\rho$. To show that $F_\Phi[f_k]\rightarrow F_\Phi[f]$ we observe that
$$
|F_\Phi[f_k-f]|=\left|\langle\Phi\rho^{-1},\rho(f_k-f)\rangle\right|\leq \|\Phi\rho^{-1}\|\|\rho(f_k-f)\|\rightarrow0
$$
for $k\rightarrow\infty$.

\subsection{Back to the IQHO}\label{sectbacktoIQHO}

In the last part of Section \ref{sect3}, we fix $\rho(x)=e^{\Omega \,\frac{x^2}{2}}$. This function is clearly continuous, and therefore $\V_\rho$ is dense in $\ltwo$, and strictly included in $\ltwo$. Moreover, it is real and invertible, and its inverse $\rho^{-1}(x)=e^{-\Omega \frac{x^2}{2}}$ is in $\Lc^\infty(\mathbb{R})$.

With this choice we see that $\varphi^{(\pm)}(z;x), \psi^{(\pm)}(z;x)\in\Theta_\rho$. Hence they belong to $\V_\rho'$.

The resolution of the identity for these states is the following:
\be
\int_{\mathbb{C}}\langle f,\psi^{(\pm)}\rangle\langle\varphi^{(\pm)},g\rangle\,\frac{dz}{\pi}=\int_{\mathbb{C}}\langle f,\varphi^{(\pm)}\rangle\langle\psi^{(\pm)},g\rangle\,\frac{dz}{\pi}=\langle f,g\rangle,
\label{324}\en
$\forall f(x),g(x)\in\V_\rho$. We only check the equality
$$
\int_{\mathbb{C}}\langle f,\varphi^{(+)}\rangle\langle\psi^{(+)},g\rangle\,\frac{dz}{\pi}=\langle f,g\rangle,
$$
since the other can be proved in a similar way. We start noticing that
$$
\varphi^{(+)}(z;x)\rho^{-1}(x)=\left(\frac{\Omega}{\pi}\right)^{1/4}\exp\left\{ i\,\frac{\pi}{8}-z_r^2+\sqrt{2\Omega}\,e^{ i\pi/4}\,z\,x- \frac{1}{\sqrt{2}}\,\Omega e^{i\pi/4}\,x^2\right\},
$$
and
$$
\psi^{(+)}(z;x)\rho^{-1}(x)=\left(\frac{\Omega}{\pi}\right)^{1/4}\exp\left\{ -i\,\frac{\pi}{8}-z_r^2+\sqrt{2\Omega}\,e^{ -i\pi/4}\,z\,x- \frac{1}{\sqrt{2}}\,\Omega e^{-i\pi/4}\,x^2\right\}.
$$
Then we have
$$
\langle f,\varphi^{(+)}\rangle=\langle f\rho,\varphi^{(+)}\rho^{-1}\rangle=\frac{e^{-i\,\frac{\pi}{8}-z_r^2}}{(\pi\Omega)^{1/4}}\int_{\tilde\Gamma}\overline{f_\rho\left(\frac{se^{-i\pi/4}}{\sqrt{\Omega}}\right)}\,e^{\sqrt{2}\,z\,s-\frac{1}{\sqrt{2}}\,\Omega e^{-i\pi/4}\,s^2}\, ds,
$$
where we have introduced the simplifying notation $f_\rho(x)=f(x)\rho(x)$ and where $\tilde\Gamma$ is the line which is obtained from the real axis after the change of variable $s=\sqrt{\Omega}\,e^{i\pi/4}\,x$, i.e. the first bisector of the complex plane. In a similar way, calling $g_\rho(x)=g(x)\rho(x)$, we find also
$$
\langle \psi^{(+)},g\rangle=\frac{e^{-i\,\frac{\pi}{8}-z_r^2}}{(\pi\Omega)^{1/4}}\int_{\tilde\Gamma}\,e^{\sqrt{2}\,\overline z\,t-\frac{1}{\sqrt{2}}\,\Omega e^{-i\pi/4}\,t^2}\,g_\rho\left(\frac{te^{-i\pi/4}}{\sqrt{\Omega}}\right) dt,
$$
where again  $t=\sqrt{\Omega}\,e^{i\pi/4}\,x$. Now, if we put these results in $\int_{\mathbb{C}}\langle f,\varphi^{(+)}\rangle\langle\psi^{(+)},g\rangle\,\frac{dz}{\pi}$ and we first integrate in $dz=dz_r\,dz_i$, $z=z_r+iz_i$, we find
$$
\int_{\mathbb{C}}\langle f,\varphi^{(+)}\rangle\langle\psi^{(+)},g\rangle\,\frac{dz}{\pi}=\frac{e^{-i\,\frac{\pi}{4}}}{\sqrt{\Omega}}\int_{\tilde\Gamma}\overline{f_\rho\left(\frac{se^{-i\pi/4}}{\sqrt{\Omega}}\right)}\,g_\rho\left(\frac{se^{-i\pi/4}}{\sqrt{\Omega}}\right)e^{is^2}\,ds,
$$
which can be further simplified by changing the integration variable as in $w=\frac{se^{-i\pi/4}}{\sqrt{\Omega}}$. In this way $\tilde\Gamma$ is mapped back to the real axis, and, recalling the definitions of $f_\rho(x)$ and $g_\rho(x)$, we find the expected result:
$$
\int_{\mathbb{C}}\langle f,\varphi^{(+)}\rangle\langle\psi^{(+)},g\rangle\,\frac{dz}{\pi}=\int_{\mathbb{R}}\overline{f(w)}\,g(w)\,dw=\langle f,g\rangle.
$$

\vspace{2mm}

{\bf Remark:--} It might be interesting to notice that what we have done for the bicoherent states of the IQHO could be done also for the eigenstates of its Hamiltonian $H$. In other words we could check that the states $\varphi_n^{(\pm)}(x)$ and $\psi_n^{(\pm)}(x)$ in (\ref{32}) and (\ref{33}) are not only tempered distributions, but also elements of $\V_\rho'$. Checking this is much simpler, but maybe less interesting (once we have proved the stronger result!). For this reason we will not give the details of this result here.

\section{Conclusions}\label{sect4}

We have shown how the IQHO can be studied adopting ideas coming from distribution theory, and how the eigenfunctions and the bi-coherent states can be recovered as a weak limit of the analogous objects of a Swanson-like Hamiltonian. Several mathematical properties of these states have been studied, from the point of view of tempered distribution and from that of continuous functional on a certain space of functions, introduced for the IQHO but discussed here in some generality. Our approach produces an alternative way to deal with a system which has attracted very much the attention of many scientists in recent years, putting the IQHO, we believe, on a mathematically firm ground.

\section*{Acknowledgements}

The author acknowledges partial support from Palermo University and from G.N.F.M. of the INdAM.

\renewcommand{\theequation}{A.\arabic{equation}}

\section*{Appendix A: proof of the biorthonormality}\label{appendixA}

This appendix is devoted to the proof of formula (\ref{215}), $\langle\varphi_n^{(\theta)},\psi_m^{(\theta)}\rangle=\delta_{n,m}$. We start rewriting, using (\ref{29}) and (\ref{210}),
$$
\langle\varphi_n^{(\theta)},\psi_m^{(\theta)}\rangle=\frac{\overline{N^{(\theta)}}\,N^{(-\theta)}}{\sqrt{2^{n+m}\,n!\,m!}}\int_{\mathbb{R}}H_n\left(e^{-i\theta/2}\sqrt{\Omega}\,x\right)H_m\left(e^{-i\theta/2}\sqrt{\Omega}\,x\right)e^{-\Omega e^{-i\theta}x^2}\,dx=
$$
$$
=\frac{\overline{N^{(\theta)}}\,N^{(-\theta)}\,e^{i\theta/2}}{\sqrt{2^{n+m}\,n!\,m!\,\Omega}}\int_{\Gamma_\theta}H_n\left(z\right)H_m\left(z\right)e^{-z^2}\,dz,
$$
where we have introduced the complex variable $z=e^{-i\theta/2}\sqrt{\Omega}\,x$ and the straight line $\Gamma_\theta$, which is what should replace $\mathbb{R}$ because of this change of variable: $\Gamma_\theta=\left\{z=(\cos(\theta/2)-i\sin(\theta/2))x, \,x\in\mathbb{R}\right\}$, see the figure. Using some trick in complex integration it is possible to show that we can {\em rotate back} $\Gamma_\theta$, getting  
\be
\int_{\Gamma_\theta}H_n\left(z\right)H_m\left(z\right)e^{-z^2}\,dz=\int_{\mathbb{R}}H_n\left(x\right)H_m\left(x\right)e^{-x^2}\,dx=2^n\,n!\sqrt{\pi}\,\delta_{n,m}.
\label{a1}\en
Hence we get
$$
\langle\varphi_n^{(\theta)},\psi_m^{(\theta)}\rangle=\overline{N^{(\theta)}}\,N^{(-\theta)}\,e^{i\theta/2}\sqrt{\frac{\pi}{\Omega}}\,\delta_{n,m},
$$
which returns (\ref{215}) if $N^{(\theta)}$ is chosen as in (\ref{214}). To prove (\ref{a1}) we start constructing the   curve $\Sigma(R)$ in figure, $R$ finite, in the complex plane.

\vspace*{15mm}
\begin{picture}(450,135)
	\put(20,66){\vector(1,0){315}} \put(180,-25){\vector(0,1){195}} \put(173,163){\makebox(0,0){$y$}} \put(330,58){\makebox(0,0){$x$}}

	\put(60,106){\line(3,-1){240}}
	
	\put(300,66){\line(0,-1){40}}
	\put(60,66){\line(0,1){40}}
	
	\put(307,14){\makebox(0,0){\small $B$}} \put(56,118){\makebox(0,0){\small$A$}}
	\put(307,73){\makebox(0,0){\small $C$}} \put(56,58){\makebox(0,0){\small$D$}}
	\put(186,72){\makebox(0,0){\small$O$}}
	
	\put(110,66){\vector(-1,0){3}}\put(250,66){\vector(-1,0){3}}
	\put(108,90){\vector(3,-1){3}}\put(250,43){\vector(3,-1){3}}
	\put(60,86){\vector(0,1){3}}\put(300,46){\vector(0,1){3}}
	
\qbezier[300](199,66)(199,65)(196,61)	

\put(290,143){\makebox(0,0){\small$-\theta/2$}}

\put(280,136){\vector(-1,-1){75}}

\end{picture}

\vspace*{8mm}

These are the coordinates of the points in figure: $A=-B=-R(1,-\tan(\theta/2))$, $D=-C=-R(1,0)$. The real axis can be recovered by the horizontal line $\gamma_{[C,D[}(R)$ in figure for $R$ diverging. Notice that, going from $C$ to $D$, we are moving in the opposite direction with respect to positive direction of the real axis. The limit $R\rightarrow\infty$ of the oblique line $\gamma_{[A,B[}(R)$ returns $\Gamma_\theta$. The vertical segments $\gamma_{[B,C[}(R)$ and $\gamma_{[D,A[}(R)$, with their arrows, close the circuit $\Sigma(R)$ given in the picture.

Now, since the function $H_{n,m}(z)=H_n\left(z\right)H_m\left(z\right)e^{-z^2}$ is analytical and has no singularity inside $\Sigma(R)$, it follows that
$$
0=\int_{\Sigma(R)}H_{n,m}(z)\,dz=$$
$$=\int_{\gamma_{[A,B[}(R)}H_{n,m}(z)\,dz+\int_{\gamma_{[B,C[}(R)}H_{n,m}(z)\,dz+\int_{\gamma_{[C,D[}(R)}H_{n,m}(z)\,dz+\int_{\gamma_{[D,A[}(R)}H_{n,m}(z)\,dz.
$$
The two integrals along the vertical lines are related. Indeed we have, using the parity of the Hermite polynomials, $H_n(-x)=(-1)^nH_n(x)$ and the parametric expressions for $\gamma_{[B,C[}(R)$ and $\gamma_{[D,A[}(R)$, $z=R+iy$ with $y_B\leq y<0$ and $z=-R+iy$ with $0\leq y<y_A=-y_B$ respectively, that
$$
\int_{\gamma_{[D,A[}(R)}H_{n,m}(z)\,dz=(-1)^{n+m}\int_{\gamma_{[B,C[}(R)}H_{n,m}(z)\,dz.
$$
Hence, it is sufficient to check that one of these integral goes to zero for $R\rightarrow\infty$ to conclude that the same is true for the other. In particular, after some minor manipulations, we deduce that
$$
\left|\int_{\gamma_{[B,C[}(R)}H_{n,m}(z)\,dz\right|\leq \frac{1}{e^{R^2}}\int_{0}^{R\tan(|\theta|/2)}|H_n(R-iy)H_m(R-iy)|e^{y^2}dy,
$$
which, for our values of $\theta$, is of the form $\frac{\infty}{\infty}$ for $R$ diverging. If we then use the de l'Hopital theorem, we see that, since $0<\tan(|\theta|/2)<1$, the right-hand side above goes to zero. This means that, taking the limit for $R\rightarrow\infty$ of $\int_{\Sigma(R)}H_{n,m}(z)\,dz$, we have
$$
0=\lim_{R,\infty}\int_{\gamma_{[A,B[}(R)}H_{n,m}(z)\,dz+\lim_{R,\infty}\int_{\gamma_{[C,D[}(R)}H_{n,m}(z)\,dz,
$$
which implies (\ref{a1}).

\renewcommand{\theequation}{B.\arabic{equation}}

\section*{Appendix B: proof of Lemma \ref{lemma1}}\label{appendixB}

To prove the assertion we need to check that
$$
I_k^{(l)}=\int_{\mathbb{R}}\left|x^lf_k(x)-x^lf(x)\right|^2\,dx \rightarrow 0,
$$
when $k\rightarrow\infty$, $\forall l\geq0$.

Indeed, since $1+|x|>|x|$ and $1+|x|\geq1$, we have 
$$
\left|x^lf_k(x)-x^lf(x)\right|\leq  (1+|x|)^l\left|f_k(x)-f(x)\right|\leq \frac{1}{1+|x|}\sup_{x\in\mathbb{R}}(1+|x|)^{l+1}\left|f_k(x)-f(x)\right|\leq
$$
$$
\leq \frac{1}{1+|x|}\sum_{n=0}^{l+1} \left(
\begin{array}{c}
	l+1  \\
	n  \\
\end{array}
\right)p_{n,0}(f_k-f),
$$
where $p_{n,0}(.)$ is, as in Section \ref{sect3}, one of the seminorms defining the topology $\tau_\Sc$. Since, by assumption, $f_k(x)$ is $\tau_\Sc$-convergent to $f(x)$, $p_{n,0}(f_k-f)\rightarrow0$ when $k\rightarrow\infty$, $\forall n\geq0$. Then, calling
$$
D_{l+1}(f_k-f)=\sum_{k=0}^{l+1} \left(
\begin{array}{c}
	l+1  \\
	n  \\
\end{array}
\right)p_{n,0}(f_k-f),
$$
we have that $D_{l+1}(f_k-f)\rightarrow0$, $\forall l\geq0$, when $k\rightarrow\infty$. Hence,
$$
I_k^{(l)}\leq \left[D_{l+1}(f_k-f)\right]^2\int_{\mathbb{R}}\frac{dx}{(1+|x|)^2}=2\left[D_{l+1}(f_k-f)\right]^2\rightarrow0,
$$
as we had to check.

\renewcommand{\theequation}{A.\arabic{equation}}

\section*{Appendix C: definition of $\D$ pseudo-bosons}\label{appendixC}

Let $\Hil$ be a given Hilbert space with scalar product $\left<.,.\right>$ and related norm $\|.\|$. 

\vspace{2mm}

Let $a$ and $b$ be two operators
on $\Hil$, with domains $D(a)$ and $D(b)$ respectively, $a^\dagger$ and $b^\dagger$ their adjoint, and let $\D$ be a dense subspace of $\Hil$
such that $a^\sharp\D\subseteq\D$ and $b^\sharp\D\subseteq\D$, where with $x^\sharp$ we indicate $x$ or $x^\dagger$. Of course, $\D\subseteq D(a^\sharp)$
and $\D\subseteq D(b^\sharp)$.

\begin{defn}\label{defc21}
	The operators $(a,b)$ are $\D$-pseudo bosonic  if, for all $f\in\D$, we have
	\be
	a\,b\,f-b\,a\,f=f.
	\label{C1}\en
\end{defn}

When CCR are replaced by (\ref{C1}), it is necessary to impose some reasonable conditions which are verified in explicit models. In particular, our starting assumptions are the following:

\vspace{2mm}

{\bf Assumption $\D$-pb 1.--}  there exists a non-zero $\varphi_{ 0}\in\D$ such that $a\,\varphi_{ 0}=0$.

\vspace{1mm}

{\bf Assumption $\D$-pb 2.--}  there exists a non-zero $\Psi_{ 0}\in\D$ such that $b^\dagger\,\Psi_{ 0}=0$.

\vspace{2mm}

It is obvious that, since $\D$ is stable under the action of the operators introduced above,  $\varphi_0\in D^\infty(b):=\cap_{k\geq0}D(b^k)$ and  $\Psi_0\in D^\infty(a^\dagger)$, so
that the vectors \be \varphi_n:=\frac{1}{\sqrt{n!}}\,b^n\varphi_0,\qquad \Psi_n:=\frac{1}{\sqrt{n!}}\,{a^\dagger}^n\Psi_0, \label{A2}\en
$n\geq0$, can be defined and they all belong to $\D$. Then, they also belong to the domains of $a^\sharp$, $b^\sharp$ and $N^\sharp$, where $N=ba$. We see that, from a practical point of view, $\D$ is the natural space to work with and, in this sense, it is even more relevant than $\Hil$. Let's put $\F_\Psi=\{\Psi_{ n}, \,n\geq0\}$ and
$\F_\varphi=\{\varphi_{ n}, \,n\geq0\}$.
It is  simple to deduce the following lowering and raising relations:
\be
\left\{
\begin{array}{ll}
	b\,\varphi_n=\sqrt{n+1}\varphi_{n+1}, \qquad\qquad\quad\,\, n\geq 0,\\
	a\,\varphi_0=0,\quad a\varphi_n=\sqrt{n}\,\varphi_{n-1}, \qquad\,\, n\geq 1,\\
	a^\dagger\Psi_n=\sqrt{n+1}\Psi_{n+1}, \qquad\qquad\quad\, n\geq 0,\\
	b^\dagger\Psi_0=0,\quad b^\dagger\Psi_n=\sqrt{n}\,\Psi_{n-1}, \qquad n\geq 1,\\
\end{array}
\right.
\label{C3}\en as well as the eigenvalue equations $N\varphi_n=n\varphi_n$ and  $N^\dagger\Psi_n=n\Psi_n$, $n\geq0$. In particular, as a consequence
of these last two equations,  if we choose the normalization of $\varphi_0$ and $\Psi_0$ in such a way $\left<\varphi_0,\Psi_0\right>=1$, we deduce that
\be \left<\varphi_n,\Psi_m\right>=\delta_{n,m}, \label{A4}\en
for all $n, m\geq0$. Hence $\F_\Psi$ and $\F_\varphi$ are biorthogonal.

The analogy with ordinary bosons suggests us to consider the following:

\vspace{2mm}

{\bf Assumption $\D$-pb 3.--}  $\F_\varphi$ is a basis for $\Hil$.

\vspace{1mm}

This is equivalent to requiring that $\F_\Psi$ is a basis for $\Hil$ as well. However, several  physical models show that $\F_\varphi$ is {\bf not} a basis for $\Hil$, but it is still complete in $\Hil$. This suggests to adopt the following weaker version of  Assumption $\D$-pb 3, \cite{baginbagbook}:

\vspace{2mm}

{\bf Assumption $\D$-pbw 3.--}  For some subspace $\G$ dense in $\Hil$, $\F_\varphi$ and $\F_\Psi$ are $\G$-quasi bases.

\vspace{2mm}
This means that, for all $f$ and $g$ in $\G$,
\be
\left<f,g\right>=\sum_{n\geq0}\left<f,\varphi_n\right>\left<\Psi_n,g\right>=\sum_{n\geq0}\left<f,\Psi_n\right>\left<\varphi_n,g\right>,
\label{A4b}
\en
which can be seen as a weak form of the resolution of the identity, restricted to $\G$. Of course, if $f\in\G$ is orthogonal to all the $\varphi_n$'s, or to all the $\Psi_n$'s, then (\ref{A4b}) implies that $f=0$. Hence $\F_\varphi$ and $\F_\Psi$ are complete in $\G$.

We refer to \cite{bagbookPT,bagspringer} for more details.

\end{document}